\theoremstyle{plain}
\newtheorem{theorem}{Theorem}
\newtheorem{lemma}{Lemma}
\newtheorem{corollary}{Corollary}
\theoremstyle{definition}
\newtheorem{definition}{Definition}
\newtheorem{example}{Example}
\newtheorem{remark}{Remark}
\newcommand{\C}{{\mathcal C}}
\renewcommand{\P}{{\mathcal P}}
\newcommand{\ba}{{\boldsymbol a}}
\newcommand{\bb}{{\boldsymbol b}}
\newcommand{\bc}{{\boldsymbol c}}
\newcommand{\ff}{\mathbb{F}}
\newcommand{\ft}{\mathbb{F}_2}
\newcommand{\al}{\alpha}
\newcommand{\supp}{{\sf supp}}
\newcommand{\dist}{{\mathsf{d}}}
\newcommand{\weight}{{\mathsf{wt}}}
\newcommand{\tr}{\mathsf{Tr}}
\newcommand{\spn}{\mathsf{span}_B}
\newcommand{\rank}{\mathsf{rank}_B}
\newcommand{\lcm}{{\sf lcm}}
\newcommand{\define}{\stackrel{\mbox{\tiny $\triangle$}}{=}}
\newcommand{\nin}{\noindent}
\renewcommand{\ge}{\geqslant}
\newcommand{\et}{{\emph{et al.}}}
\newcommand{\Cd}{\mathcal{C}^\perp}
\newcommand{\rsk}{\text{RS}(A,k)}
\newcommand{\grskl}{\text{GRS}(A,k,\boldsymbol{\lambda})}
\newcommand{\grsnkl}{\text{GRS}(A,n-k,\boldsymbol{\lambda})}
\newcommand{\fa}{f(\alpha)}
\newcommand{\fas}{f(\alpha^*)}
\newcommand{\fab}{f(\overline{\alpha})}
\newcommand{\fap}{f(\alpha')}
\renewcommand{\psi}{p^*_i}
\newcommand{\psix}{p^*_i(x)}
\newcommand{\psia}{p^*_i(\alpha)}
\newcommand{\psias}{p^*_i(\alpha^*)}
\newcommand{\psta}{p^*_t(\alpha)}
\newcommand{\pstas}{p^*_t(\alpha^*)}
\newcommand{\pstab}{p^*_t(\overline{\alpha})}
\newcommand{\psoas}{p^*_1(\alpha^*)}
\newcommand{\psoab}{p^*_1(\overline{\alpha})}
\newcommand{\pstwab}{p^*_2(\overline{\alpha})}
\newcommand{\pstmoab}{p^*_{t-1}(\overline{\alpha})}
\newcommand{\pa}{p(\alpha)}
\newcommand{\pas}{p(\alpha^*)}
\newcommand{\pia}{p_i(\alpha)}
\newcommand{\pias}{p_i(\alpha^*)}
\newcommand{\piab}{p_i(\overline{\alpha})}
\newcommand{\piap}{p_i(\alpha')}
\newcommand{\qia}{q_i(\alpha)}
\newcommand{\qias}{q_i(\alpha^*)}
\newcommand{\qiab}{q_i(\overline{\alpha})}
\newcommand{\qiap}{q_i(\alpha')}
\newcommand{\ria}{r_i(\alpha)}
\newcommand{\rias}{r_i(\alpha^*)}
\newcommand{\riab}{r_i(\overline{\alpha})}
\newcommand{\riap}{r_i(\alpha')}
\newcommand{\po}{p_1}
\newcommand{\poas}{p_1(\alpha^*)}
\newcommand{\qo}{q_1}
\newcommand{\qoab}{q_1(\overline{\alpha})}
\newcommand{\roap}{r_1(\alpha')}
\newcommand{\ptmo}{p_{t-1}}
\newcommand{\ptmoa}{p_{t-1}(\alpha)}
\newcommand{\ptmoas}{p_{t-1}(\alpha^*)}
\newcommand{\ptmoab}{p_{t-1}(\overline{\alpha})}
\newcommand{\ptmoap}{p_{t-1}(\alpha')}
\newcommand{\qtmo}{q_{t-1}}
\newcommand{\qtmoab}{q_{t-1}(\overline{\alpha})}
\newcommand{\rtmo}{r_{t-1}}
\newcommand{\rtmoa}{r_{t-1}(\alpha)}
\newcommand{\rtmoas}{r_{t-1}(\alpha^*)}
\newcommand{\rtmoab}{r_{t-1}(\overline{\alpha})}
\newcommand{\rtmoap}{r_{t-1}(\alpha')}
\newcommand{\pt}{p_t}
\newcommand{\pta}{p_t(\alpha)}
\newcommand{\ptas}{p_t(\alpha^*)}
\newcommand{\ptab}{p_t(\overline{\alpha})}
\newcommand{\ptap}{p_t(\alpha')}
\newcommand{\qt}{q_t}
\newcommand{\qta}{q_t(\alpha)}
\newcommand{\qtas}{q_t(\alpha^*)}
\newcommand{\qtab}{q_t(\overline{\alpha})}
\newcommand{\qtap}{q_t(\alpha')}
\newcommand{\rt}{r_t}
\newcommand{\rtap}{r_t(\alpha')}
\newcommand{\qsb}{Q_{\alpha^*,\overline{\alpha}}}
\newcommand{\qps}{Q_{\alpha', \alpha^*}}
\newcommand{\qsbz}{Q_{\alpha^*,\overline{\alpha}}(z)}
\newcommand{\qbpz}{Q_{\overline{\alpha}, \alpha'}(z)}
\newcommand{\qpsz}{Q_{\alpha', \alpha^*}(z)}
\newcommand{\Qab}{Q_{\alpha, \beta}}
\newcommand{\Qabz}{Q_{\alpha, \beta}(z)}
\newcommand{\Qba}{Q_{\beta,\alpha}}
\newcommand{\Qbaz}{Q_{\beta,\alpha}(z)}
\newcommand{\Qbgz}{Q_{\beta, \gamma}(z)}
\newcommand{\Qgaz}{Q_{\gamma,\alpha}(z)}
\newcommand{\ps}{p_{s}}
\newcommand{\psas}{p_{s}(\alpha^*)}
\newcommand{\qs}{q_{s}}
\newcommand{\qsab}{q_{s}(\overline{\alpha})}
\newcommand{\rs}{r_{s}}
\newcommand{\rsap}{r_{s}(\alpha')}
\newcommand{\us}{u_{s}}
\newcommand{\vs}{v_{s}}
\newcommand{\ws}{w_{s}}
\newcommand{\utmt}{u_{t-2}}
\newcommand{\utmo}{u_{t-1}}
\newcommand{\ut}{u_t}
\newcommand{\vtmt}{v_{t-2}}
\newcommand{\vtmo}{v_{t-1}}
\newcommand{\vt}{v_t}
\newcommand{\wtmt}{w_{t-2}}
\newcommand{\wtmo}{w_{t-1}}
\newcommand{\wt}{w_t}
\newcommand{\Kab}{K_{\alpha,\beta}}
\newcommand{\Kba}{K_{\beta,\alpha}}
\newcommand{\Kbg}{K_{\beta,\gamma}}
\newcommand{\Kabg}{K_{\alpha,\beta,\gamma}}
\newcommand{\Ksb}{K_{\alpha^*,\overline{\alpha}}}
\newcommand{\Kbp}{K_{\overline{\alpha},\alpha'}}
\newcommand{\Kps}{K_{\alpha',\alpha^*}}
\newcommand{\Ksbp}{K_{\alpha^*,\overline{\alpha},\alpha'}}
\newcommand{\cF}{{\sf{char}}(F)}
\newcommand{\is}{i^*}
\newcommand{\ib}{\overline{i}}
\newcommand{\ca}{c_{\alpha}}
\newcommand{\cas}{c_{\alpha^*}}
\newcommand{\cab}{c_{\overline{\alpha}}}
\newcommand{\as}{\alpha^*}
\newcommand{\ab}{\overline{\alpha}}
\begin{document}

\title{Repairing Reed-Solomon Codes\\ With Multiple Erasures
\thanks{
Part of this work~\cite{DauDuursmaKiahMilenkovicTwoErasures2017} was presented at the IEEE International Symposium on Information Theory at Aachen, Germany, in June, 2017.
}
\thanks{
H. Dau is with the Department of Electrical and Computer Systems Engineering, Monash University. Email: hoang.dau@monash.edu. This work was done when he was with the University of Illinois at Urbana-Champaign. O. Milenkovic is with the Coordinated Science Laboratory, University of Illinois at Urbana-Champaign, 1308 W. Main Street, Urbana, IL 61801, USA. Email: milenkov@illinois.edu.
I. Duursma is with the Department of Mathematics, and also with the Coordinated Science Laboratory, University of Illinois at Urbana-Champaign, 1409 W. Green St, Urbana, IL 61801, USA. Email: duursma@illinois.edu.
H. M. Kiah is with the Division of Mathematical Sciences, School of Physical and Mathematical Sciences, Nanyang Technological University, 21 Nanyang Link, Singapore 637371. Email: hmkiah@ntu.edu.sg.} 
}
\author{Hoang Dau, \emph{Member}, \emph{IEEE}, Iwan Duursma, Han Mao Kiah, \emph{Member}, \emph{IEEE}, and Olgica Milenkovic, \emph{Fellow}, \emph{IEEE}}

\date{}
\maketitle
%\pagestyle{empty}
%%%%%%%%%%%%%%%%%%%%%%%%%%%%%%%%%%

\begin{abstract}
Despite their exceptional error-correcting properties, Reed-Solomon codes have been overlooked in distributed storage applications 
due to the common belief that they have poor repair bandwidth: A naive repair approach would require the whole file to be reconstructed in 
order to recover a single erased codeword symbol. In a recent work, Guruswami and Wootters (STOC'16) proposed a single-erasure 
repair method for Reed-Solomon codes that achieves the optimal repair bandwidth amongst all linear encoding schemes. 
Their key idea is to recover the erased symbol by collecting a sufficiently large number of its traces, each of which can be constructed from 
a number of traces of other symbols. 
We extend the trace collection technique to cope with two and three erasures.     
\end{abstract}

\section{Introduction}
\label{sec:intro}

\subsection{Background}

The \emph{repair bandwidth} is an important performance metric of erasure codes in the context of distributed storage~\cite{Dimakis_etal2007, Dimakis_etal2010}.
In such a system, for a chosen field $F$, a data vector in $F^k$ is mapped to a codeword vector in $F^n$, whose entries are 
stored at different storage nodes. When a node fails, the symbol stored at that node is erased (lost). 
A replacement node (RN) has to recover the content
stored at the failed node by downloading repair data from the other nodes. 
The repair bandwidth is the total amount of information that the RN has to download in order to successfully complete the repair process. 
Bandwidth-efficient repair schemes not only allow low-cost recovery of permanently failed nodes, but also improve the performance of \emph{degraded reads} during transient node failures.

Reed-Solomon (RS) codes~\cite{ReedSolomon1960}, which have been extensively studied in theory~\cite{MW_S} and widely used in practice, were believed to have prohibitively high repair bandwidth. In a naive repair scheme, recovering the content stored at a \emph{single} failed node would require downloading the \emph{whole} file, i.e., $k$ symbols over $F$. The poor performance in repairing failed nodes of RS codes motivated the introduction of repair-efficient codes such as regenerating codes~\cite{Dimakis_etal2007, Dimakis_etal2010, Dimakis_etal2010_survey} and locally repairable codes~\cite{OggierDatta2011,GopalanHuangSimitciYekhanin2012,PapailiopoulosDimakis2012}.

\begin{figure}[t]
\centering
\includegraphics[scale=1]{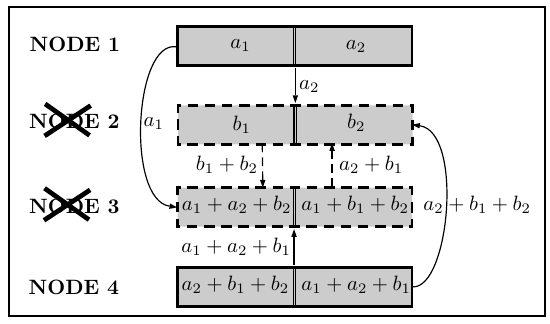}
\caption{A toy example illustrating the repair procedure for \emph{two} failed nodes in a four-node storage system
based on a $[4,2]$ Reed-Solomon code over $\ff_4$. 
%The procedure in question is the distributed repair scheme introduced in Section~\ref{subsec:distributedII}. 
The stored file is $\big((a_1,a_2),(b_1,b_2)\big) \in \ff_4^2$, where $a_1$, $a_2$, $b_1$, and $b_2$ are bits in $\ff_2$. 
Suppose that Node~2 and Node~3 fail
simultaneously. In the Download Phase, each replacement node downloads two bits (along the solid arrows) from the two available nodes, namely Node~1 and Node~4. In the Collaboration Phase, the replacement nodes communicate with each other to complete their own repair processes by exchanging two extra bits (along the dashed arrows), computed based on the previously downloaded bits. 
}
\label{fig:toy_example}
%\vspace{-10pt}
\end{figure}

Guruswami and Wootters~\cite{GuruswamiWootters2016} recently proposed a bandwidth-optimal linear repair method based 
on RS codes. The key idea behind their method is to recover a single erased symbol by collecting a sufficiently large number of
its (field) traces, each of which can be constructed from a number of traces of other 
symbols. As all traces belong to a subfield $B$ of $F$ and traces from the same symbol are related, the total repair bandwidth can be significantly reduced.
The repair scheme obtained by Guruswami and Wootters~\cite{GuruswamiWootters2016}, however, only applies to the case
of one erasure, or in other words, one failed node. 

\subsection{Our Contribution}

We propose an extension of the Guruswami-Wootters repair scheme that can ensure recovery from multiple erasures, more precisely two or three erasures. In particular, we provide two \emph{distributed} repair schemes for RS codes for the case of \emph{two} erasures, both of which use
the same repair bandwidth ($n-1$ sub-symbols) \emph{per erased symbol}, equal to that of a single erasure (although note that the lower bound for the repair bandwidth per erasure in the case of multiple erasures may not be the same as that for the case of a single erasure).
In these schemes, the two RNs first download repair data
from all available nodes (Download Phase). They subsequently collaborate to exchange the data in order to complete the repair process at each node (Collaboration Phase).
The first repair scheme requires one round of collaboration, that is, in the Collaboration Phase, the two RNs send out repair data to each other simultaneously in one round. This scheme works whenever the field extension degree $t$ is divisible by the characteristic of the field $F$. An example illustrating the first scheme is given in Fig.~\ref{fig:toy_example}. The second scheme, however, requires two rounds of collaboration, that is, in the Collaboration Phase, one RN receives the repair data from the other RN, completes its repair process, and then sends out
its repair data to the other node. This scheme applies to \emph{all} field extension degrees. Both distributed repair schemes can be easily modified to obtain \emph{centralized} repair schemes, where the two erased symbols are recovered simultaneously by a repair center. %Those schemes require a bandwidth of $n-2$ sub-symbols from the subfield $B$ per erased symbol, which outperforms a naive centralized repair scheme whenever $t \geq \frac{2|B|}{|B|-1}$ and $k = n(1-1/|B|)$.   
 
We also develop a \emph{centralized} repair scheme to cater to \emph{three} erasures, and precisely characterize the patterns of three erasures that can be
recovered at a low cost. %The repair bandwidth for these patterns equals $n-3$ sub-symbols from $B$ per erased symbol of $F$. 
%This scheme can be slightly modified to produce a \emph{distributed} repair scheme with a repair bandwidth of $n-1$ sub-symbols from $B$ per erased symbol, which is optimal when $n=|F|=|B|^t$ and $k = n(1-1/|B|)$. This distributed scheme has \emph{collaboration depth three}. Both schemes work whenever the extension degree $t$ is divisible by the characteristic of $F$.
This scheme can be slightly modified to produce a \emph{distributed} repair scheme with a repair bandwidth of $n-1$ sub-symbols from $B$ per erased symbol, which equals the bandwidth needed for a single erasure repair. This distributed scheme requires three rounds of collaboration. Both schemes work whenever the extension degree $t$ is divisible by the characteristic of $F$.

\begin{table}[htb]
\tabcolsep=0.1cm
	\centering
		\begin{tabular}{|l|l|l|}
		\hline
			\textbf{Storage system} & \textbf{Erasure code} & $\#$ \textbf{tolerable failures}\\
			\hhline{|=|=|=|}
			Google File System II (Colossus)~\cite{Fikes2010} & RS(9,6) & Three\\
			\hline
			Quantcast File System~\cite{QFS2013} & RS(9,6) & Three\\
			\hline
			Hadoop Distributed File System~\cite{HDFS-EC16} & RS(9,6) & Three\\
			\hline
			Yahoo Cloud Object Store~\cite{YahooObjectStore} & RS(11,8) & Three\\
			\hline
			Backblaze Vaults~\cite{Backblaze2015} & RS(20,17) & Three\\
			\hline
			Facebook f4 Storage System~\cite{FBf42014} & RS(14,10) & Four\\
			\hline
			Baidu Atlas Cloud Storage~\cite{BaiduAtlas2015} & RS(12,8) & Four\\
			\hline
		\end{tabular}
		\caption{Some well known storage systems and their underlying Reed-Solomon codes (more precisely, their underlying \emph{generalized} Reed-Solomon codes (see Section~\ref{subsec:def_not})). However, it is common practice to refer to them as Reed-Solomon codes.}
		\label{tab:popular_systems}
		%\vspace{-10pt}
\end{table}
 
It is worth pointing out that most of the current distributed storage systems offer protection
against at most three or four node failures, which mitigates the need for higher-order failure correction (see Table~\ref{tab:popular_systems}).
Multiple failures, although significantly less frequently than
single failures, do occur in these storage systems. For instance, it was observed~\cite{Rashmi2013,Rashmi2014} that in the 
data-warehouse cluster in production at Facebook, a median of $180$ Terabytes of data is transferred through the top-of-racks switches every day to repair $95,500$ blocks of RS-coded data, $98.08\%$ of which are single failures, $1.87\%$ are double failures, and $0.05\%$ are triple or higher order failures. Multiple-failure repair schemes are also useful in storage systems that
adopt a \emph{lazy repair} model such as Total Recall~\cite{Bhagwan-et2004} or liquid storage systems~\cite{Luby_etal2017}.
In such a system, one waits until the number of failures reaches a certain threshold before performing the repair procedure. If designed properly, this repair mode can help to reduce the network recovery traffic, while still guaranteeing an acceptable low risk of permanent system failure~\cite{Giroire2010, Silberstein2014}.  

\subsection{Organization}    

The paper is organized as follows. We first provide relevant definitions and introduce the terminology used throughout the paper. We then proceed to 
discuss the naive repair scheme as well as the Guruswami-Wootters repair scheme for RS codes in the presence of a single erasure in Section~\ref{sec:GW}. Our main results -- repair schemes for RS codes in the presence of two erasures and three erasures -- are presented in Section~\ref{sec:two_erasures} and Section~\ref{sec:three_erasures}. 
We conclude the paper in Section~\ref{sec:conclusions} by discussing some open problems and related work.

\section{Repairing Reed-Solomon Codes With One Erasure}
\label{sec:GW}

We start by introducing relevant definitions and the notation used in all subsequent derivations, and then proceed to 
review the naive repair approach as well as the approach proposed by
Guruswami and Wootters~\cite{GuruswamiWootters2016} for repairing a single erasure/node failure in RS codes. 

\subsection{Definitions and Notations}
\label{subsec:def_not}

Let $[n]$ denote the set $\{1,2,\ldots,n\}$. Let $B = \text{GF}(p^m)$ be the finite field of $p^m$ elements, for some prime $p$
and $m \geq 1$. Let $F = \text{GF}(p^{mt})$ be a field extension of $B$, where $t \geq 1$. The prime number $p$ is called the \emph{characteristic} of $F$, denoted $\cF$.
We ofter refer to the elements of $F$ as \emph{symbols} and the elements of $B$ as \emph{sub-symbols}. 
We can also treat $F$ as a vector space of dimension $t$ over $B$, i.e. $F \cong B^t$, and hence each symbol in $F$ may be represented 
as a vector of length $t$ over $B$. The rank of a set $V \subseteq F$ over $B$ is denoted by $\rank(V)$.

A linear $[n,k]$ code $\C$ over $F$ is a subspace of $F^n$ of dimension $k$. Each element of a code is referred
to as a codeword. 
The dual of a code $\C$, denoted $\Cd$, is the orthogonal complement of $\C$. 
The support of a vector $\bc = (c_1,\ldots,c_n) \in F^n$ is defined as $\supp(\bc) = \{j \colon c_j \neq 0\}$. 
The Hamming weight of a vector $\bc \in F^n$, denoted $\weight(\bc)$, is 
%the number of its nonzero coordinates.
%, or in other words, 
the size of its support $\supp(\bc)$. 
The Hamming distance of any two vectors $\bc$ and $\bc'$ in $F^n$, denoted $\dist(\bc,\bc')$, is defined by $\weight(\bc-\bc')$. The minimum distance of a code $\C$, denoted $\dist(\C)$, is the minimum Hamming distance between any two distinct codewords of $\C$. The Singleton bound establishes that for any $[n,k]$ code of minimum distance $d$, 
the inequality $d \leq n - k + 1$ holds (see~\cite{MW_S}). A code that has minimum distance attaining this bound is called a maximum-distance separable (MDS) code.
 
\begin{definition} 
\label{def:RS}
Let $F[x]$ denote the ring of polynomials over $F$. The Reed-Solomon code
$\text{RS}(A,k) \subseteq F^n$ of dimension $k$ over a finite
field $F$ with evaluation points $A=\{\alpha_1,\alpha_2,\ldots, \alpha_n\}\subseteq F$
is defined as: 
\[
\text{RS}(A,k) = \Big\{\big(f(\alpha_1),\ldots,f(\alpha_n)\big) \colon f \in F[x],\ \deg(f) < k \Big\}.
\]
\end{definition} 
A \emph{generalized} Reed-Solomon code, $\grskl$, where $\boldsymbol{\lambda} = (\lambda_1,\ldots,\lambda_n)\in F^n$, is defined similarly to a Reed-Solomon code, except that the codeword
corresponding to a polynomial $f$ is now defined as $\big( \lambda_1f(\alpha_1),\ldots,\lambda_n f(\alpha_n) \big)$, $\lambda_i \neq 0$ for all $i \in [n]$.
It is well known that the dual of an RS code $\rsk$, for any $n \leq |F|$, is a generalized RS code $\grsnkl$, for some multiplier vector $\boldsymbol{\lambda}$~(see~\cite[Chp.~10]{MW_S}). Whenever clear from the context, we use $f(x)$ to denote a polynomial of degree at most $k-1$, which corresponds to a codeword of the RS code $\C=\rsk$, and $p(x)$ to denote a polynomial of degree at most $n-k-1$, which corresponds to a dual codeword in $\Cd$. Since
$
\sum_{\alpha \in A}\pa(\lambda_{\alpha}\fa) = 0, 
$
we refer to such a polynomial $p(x)$ as a check polynomial for $\C$. 
Note that when $n = |F|$, we have $\lambda_\alpha = 1$ for all $\alpha \in F$. 
In general, as recovering $\fa$ is equivalent to recovering $\lambda_{\alpha}\fa$, to simplify the notation, we often omit the factor $\lambda_\alpha$ in the equation above.

%It is well known that RS codes are MDS codes, and that the dual of an RS code $\rsk$, when $n = |F|$, is also an RS code $\rsnk$ (see, for instance~\cite[Chp.~10]{MW_S}). A \emph{generalized} RS code, $\text{GRS}(A,k,\boldsymbol{\lambda})$, where $\boldsymbol{\lambda} = (\lambda_1,\ldots,\lambda_n)\in F^n$, is defined similarly to an RS code, except that the codeword corresponding to a polynomial $f$ is now defined as $\big( \lambda_1f(\alpha_1),\ldots,\lambda_n f(\alpha_n) \big)$, $\lambda_i \neq 0$ for all $i \in [n]$. It is also known that the dual of an RS code $\rsk$, for any $n \leq |F|$, is a generalized RS code $\text{GRS}(A,n-k,\boldsymbol{\lambda})$, for some multiplier vector $\boldsymbol{\lambda}$~(see~\cite[Chp.~10]{MW_S}). Whenever clear from the context, we use $f(x)$ to denote a polynomial of degree at most $k-1$, which corresponds to a codeword of the RS code $\C=\text{RS}(A,k)$, and $p(x)$ to denote a polynomial of degree at most $n-k-1$, which corresponds to a dual codeword in $\Cd$. Since for $A=F$, $\sum_{\alpha \in A}\pa\fa = 0,$ we refer to such a polynomial $p(x)$ as a check polynomial for $\C$. 

\subsection{Naive Repair of One Erasure}

Suppose that the polynomial $f(x) \in F[x]$ corresponds to a codeword in the RS code $\C=\text{RS}(A,k)$ and that $f(\alpha^*)$ is the erased 
symbol, where $\alpha^* \in A$ is an evaluation point of the code. 
Pick a check $p(x) \in F[x]$ of $\C$ such that $p(\alpha^*) \neq 0$ and the Hamming weight of the dual codeword corresponding to $p(x)$ is precisely equal to $k+1$. 
The existence of such check is guaranteed by the fact that in an MDS code of length $n$ and minimum distance $n-(n-k)+1=k+1$, 
each subset of $[n]$ of size $k+1$ is always the support of some codeword. The check $p(x)$ generates the following 
\emph{repair equation}:
\begin{equation} 
\label{eq:naive_repair}
\pas \fas = -\sum_{\alpha \in A \setminus \{\alpha^*\}}\pa\fa. 
\end{equation}   
Since there are precisely $k$ evaluation points $\alpha \neq \alpha^*$ where $p(\alpha)\neq 0$, \eqref{eq:naive_repair}
implies that $\fas$ can be computed from $\fa$ for $k$ different $\alpha \in A\setminus \{\alpha^*\}$. 
Hence, the naive repair scheme requires downloading $k$ symbols over $F$, or equivalently, $kt$ sub-symbols over $B$, 
which is equal to the size of the stored file. 
Although this is a well-known fact regarding RS codes, it provides a smooth transition from the naive repair to the trace repair framework, replacing the one repair equation~\eqref{eq:naive_repair} with $t$ equations~\eqref{eq:GW_repair}.

\subsection{The Guruswami-Wootters Repair Scheme for One Erasure}
\label{subsec:GW}

Given that $F$ is a field extension of $B$ of degree $t$, i.e. $F = \text{GF}(p^{mt})$ and $B = \text{GF}(p^m)$, for 
some prime $p$, one may define the field trace of any symbol $\alpha \in F$ as
\[
\mathsf{Tr}_{F/B}(\alpha) = \sum_{i = 0}^{t-1} \alpha^{|B|^i},
\]
which is always a sub-symbol in $B$. We often omit the subscript $F/B$ for succinctness. 
The key points in the repair scheme proposed by Guruswami and Wootters~\cite{GuruswamiWootters2016} can be 
summarized as follows:
\begin{itemize}
	\item Each symbol in $F$ can be recovered from its $t$ independent traces, which we call the \emph{target} traces. More precisely, given a 
	basis $u_1,u_2,\ldots,u_t$ of $F$ over $B$, any $\alpha \in F$ can be uniquely determined given the values of $\tr(u_i\,\alpha)$ for $i\in [t]$, i.e. $\alpha = \sum_{i=1}^t\tr(u_i \alpha)u^\perp_i$, where $\{u^\perp_i\}_{i=1}^t$ is the dual (trace-orthogonal) basis of $\{u_i\}_{i=1}^t$ (see, for instance~\cite[Ch.~2, Def.~2.30]{LidlNiederreiter1986}).
	\item When $n-k\ge |B|^{t-1}$, the $t$ target traces can be computed from $n-1$ sub-symbols, referred to as the \emph{repair} traces, downloaded from the $n-1$ available nodes. Thus, the repair bandwidth is $n-1$ sub-symbols.
\end{itemize}
Note that the checks of $\C$ are precisely those polynomials $p(x) \in F[x]$ where $\deg(p) < n - k$. 
It turns out that when $n-k\geq |B|^{t-1}$, we can define checks that take part in the repair process via the trace function. More specifically, for each $u \in F$ and $\alpha \in F$, we define the polynomial
\begin{equation} 
\label{eq:p}
p_{u,\alpha}(x) = \tr\big(u(x-\alpha)\big)/(x-\alpha).
\end{equation} 
By the definition of the trace function, the following lemma follows in a straightforward manner. 

\begin{lemma}[\cite{GuruswamiWootters2016}]
\label{lem:p}
The polynomial $p_{u,\alpha}(x)$ defined in~\eqref{eq:p} satisfies the following properties.\\
\quad (a) $\deg(p_{u,\alpha}) = |B|^{t-1}-1$;\quad \text{(b)} $p_{u,\alpha}(\alpha) = u$.
\end{lemma}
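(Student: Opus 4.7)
The plan is to expand the trace explicitly as a polynomial in $x$ and then perform the division by $x-\alpha$ in closed form; both (a) and (b) then follow by inspection.

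First, I would use the definition of the trace to write
\[
\tr\bigl(u(x-\alpha)\bigr) = \sum_{i=0}^{t-1} \bigl(u(x-\alpha)\bigr)^{|B|^i} = \sum_{i=0}^{t-1} u^{|B|^i} (x-\alpha)^{|B|^i},
\]
where the second equality uses the fact that raising to the power $|B|^i$ (the $i$-fold Frobenius over $B$) is a ring homomorphism, so it distributes over the product $u \cdot (x-\alpha)$. This exhibits $\tr\bigl(u(x-\alpha)\bigr)$ as an honest polynomial in $F[x]$ whose highest-degree monomial is $u^{|B|^{t-1}}(x-\alpha)^{|B|^{t-1}}$.

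Next, I would observe that every term in this sum is divisible by $(x-\alpha)$ (the smallest exponent of $(x-\alpha)$ appearing is $|B|^0 = 1$), so the division is exact and yields the explicit polynomial
\[
p_{u,\alpha}(x) = \sum_{i=0}^{t-1} u^{|B|^i} (x-\alpha)^{|B|^i - 1} \in F[x].
\]
Part (a) then reads off immediately: the unique term of largest degree is the $i=t-1$ term, whose degree is $|B|^{t-1}-1$, and its coefficient $u^{|B|^{t-1}}$ is nonzero whenever $u \neq 0$.

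For part (b), I would substitute $x=\alpha$ into the displayed expression for $p_{u,\alpha}(x)$. For every $i \geq 1$ the exponent $|B|^i - 1$ is strictly positive, so $(x-\alpha)^{|B|^i-1}$ vanishes at $\alpha$; only the $i=0$ term survives, and it equals $u^{|B|^0}(x-\alpha)^0 = u$.

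There is no real obstacle here; the only subtlety worth flagging is that one must justify moving the power $|B|^i$ past the product $u(x-\alpha)$, which relies on $\mathrm{char}(F) = p$ (so that $|B|^i$-th powering is a Frobenius endomorphism). Once that is in hand, both claims are immediate from the closed form.
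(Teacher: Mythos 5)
Your proof is correct and is exactly the intended argument: the paper itself offers no proof, stating only that the lemma ``follows in a straightforward manner'' from the definition of the trace, and your explicit expansion $p_{u,\alpha}(x)=\sum_{i=0}^{t-1}u^{|B|^i}(x-\alpha)^{|B|^i-1}$ via the Frobenius is the standard way (as in Guruswami--Wootters) to make that precise. Your remark that the leading coefficient is nonzero only for $u\neq 0$ is a fair point of care, since part (a) implicitly assumes $u\neq 0$.
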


By Lemma~\ref{lem:p}~(a), $\deg(p_{u,\alpha}) = |B|^{t-1}-1 < n - k$. Therefore, the polynomial $p_{u,\alpha}(x)$ corresponds to a codeword of $\Cd$ and is a check for $\C$. 
Now let $U = \{u_1,\ldots,u_t\}$ be a basis of $F$ over $B$, and set
\[
p_i(x) \define p_{u_i,\alpha^*}(x) = \tr\big(u_i(x-\alpha^*)\big)/(x-\alpha^*), \quad i \in [t].
\]  
These $t$ polynomials correspond to $t$ codewords of $\Cd$. 
Therefore, we obtain $t$ equations of the form 
\begin{equation} 
\label{eq:GW_repair}
\pias\fas = - \sum_{\alpha \in A \setminus \{\alpha^*\}} \pia\fa, \quad i \in [t]. 
\end{equation} 
An essential step in the Guruswami-Wootters repair scheme is to apply the trace function to
both sides of \eqref{eq:GW_repair} to obtain $t$ different \emph{repair equations}
\begin{equation} 
\label{eq:GW_repair_trace}
\tr\big(\pias\fas\big) = - \sum_{\alpha \in A \setminus \{\alpha^*\}} \tr\big(\pia\fa\big), \ i \in [t]. 
\end{equation}
According to Lemma~\ref{lem:p}~(b), $\pias = u_i$, for $i = 1,\ldots,t$. Moreover, by the
linearity of the trace function, we also have
\[
\tr\big(\pia\fa\big) = \tr\big(u_i(\alpha -\alpha^*)\big) \times \tr\Big(\dfrac{f(\alpha)}{\alpha-\alpha^*}\Big). 
\]
Therefore, one can rewrite~\eqref{eq:GW_repair_trace} as follows: 
\begin{equation}
\label{eq:GW_repair_trace_explicit}
\tr\big(u_i\fas\big) = - \hspace{-10pt} \sum_{\alpha \in A \setminus \{\alpha^*\}}\hspace{-10pt} \tr\big(u_i(\alpha -\alpha^*)\big) \times \tr\Big(\dfrac{f(\alpha)}{\alpha-\alpha^*}\Big), \ i \in [t]. 
\end{equation}
The right-hand side sums of the equations~\eqref{eq:GW_repair_trace_explicit} can be computed by downloading the \emph{repair} trace 
$\tr\Big(\frac{f(\alpha)}{\alpha-\alpha^*}\Big)$ from the node storing $\fa$, for each
$\alpha \in A \setminus \{\alpha^*\}$.
As a consequence, the $t$ independent \emph{target} traces $\tr\big(u_i\fas\big)$, $i = 1,\ldots,t$, of $\fas$ can be determined by downloading one sub-symbol from each of the $n-1$ available nodes. The erased symbol $\fas$ can subsequently be recovered from its $t$ independent target traces. 
By~\cite[Cor.~1]{DauMilenkovic2017}, this scheme is bandwidth-optimal when $n = |F|$ and $k = n(1-1/|B|)$.

%\begin{remark}
%\label{rm:general}
%The presented trace collection technique is valid for all $k \leq n \leq |F| = |B|^t$ where $n-k \geq |B|^{t-1}$. This assumption guarantees that the polynomial $p_{u,\alpha^*}(x)$, defined as in~\eqref{eq:p}, has degree at most $n-k-1$ and indeed corresponds to a dual codeword. More precisely, when $n \leq |F|$, there exists some multiplier vector $\boldsymbol{\lambda}\in F^n$ such that $\big(\lambda_1p_{u,\alpha^*}(\alpha_1),\ldots,\lambda_np_{u,\alpha^*}(\alpha_n)\big) \in \C^\perp$ (see Section~\ref{subsec:def_not}). In this context, the repair equation~\eqref{eq:GW_repair_trace_explicit} should be modified by adding a multiplier $\lambda_j$ to the argument of $\tr$ in the term corresponding to $\alpha = \alpha_j$. All other steps, with the updated multipliers, are implemented in exactly the same manner as for the originally described parameter setting. 
%\end{remark}

\section{Repairing Reed-Solomon Codes with Two Erasures}
\label{sec:two_erasures}

We consider the same setting as in Section~\ref{subsec:GW}, i.e. $n - k \geq |B|^{t-1}$, where $B = \text{GF}(p^m)$ and $F = \text{GF}(p^{mt})$, 
and assume as before that $\C$ is a Reed-Solomon code $\rsk$ over $F$ (see Definition~\ref{def:RS}). 
However, we now suppose that two storage nodes failed, or equivalently, that two codeword symbols, say $\fas$ and $\fab$, 
are erased. Two distributed repair schemes are proposed, both of which use the same bandwidth per erasure as for the case of a single erasure~\cite{GuruswamiWootters2016}.

A typical distributed repair scheme consists of two phases: The Download Phase and the Collaboration Phase. In the Download Phase, the replacement nodes contact and download recovery data from all available nodes. In the Collaboration Phase, the replacement nodes collaborate by exchanging information among themselves to help each other complete their repair processes. The repair bandwidth (per erasure) is defined as the amount of data each replacement node must download from the other nodes during its repair process. 
%We propose \emph{distributed} and \emph{centralized} repair schemes to recover these two symbol erasures. The distributed schemes are bandwidth-optimal when $n = |B|^t$ and $k = n(1-1/|B|)$. 
%\textcolor{red}{We also construct distributed repair schemes for all patterns of two erasures for the Facebook's code GRS$(14,10)$ with a bandwidth of $60$ bits per erased symbol.}

\subsection{General Idea}
\label{subsec:general_idea}

We first discuss the challenges associated with repairing two erased symbols and then present some basic results that are useful for both Section~\ref{subsec:depth_one} and Section~\ref{subsec:depth_two}.

A check $p(x)$ is said to \emph{involve} a codeword symbol $\fa$ if $\pa \neq 0$. Otherwise, $p(x)$ is said to \emph{exclude} $\fa$.  When only one symbol $\fas$ is erased, every check $p(x)$ that involves $\fas$ can be used to generate a repair equation as follows.
\begin{equation} 
\label{eq:trace_repair}
\tr\big(\pas \fas\big) = -\sum_{\alpha \in A \setminus \{\alpha^*\}}\tr\big(\pa\fa\big). 
\end{equation}

However, when two symbols $\fas$ and $\fab$ are erased, in order to, say, recover $\fas$, we no longer have
the freedom to use every possible check that involves $\fas$. Indeed, those checks that involve both $\fas$ and $\fab$
cannot be used in a straightforward manner for repair, because we cannot simply compute the right-hand side sum of \eqref{eq:trace_repair} without retrieving some information from $\fab$.

The gist of our approach for the distributed repair schemes is to first generate those checks that only involve one erased symbol, $\fas$ or $\fab$, but not both. 
We show that there exist $2(t - 1)$ such checks, which are used in the Download
Phase. Each RN uses $t-1$
checks and downloads the corresponding $n-2$ sub-symbols from each available node. 
Apart from the $t-1$ checks that involve $\fas$ but not $\fab$, and the $t-1$ checks that involve $\fab$ but not $\fas$, we also introduce
\emph{two} additional checks that involve both $\fas$ and $\fab$, which are useful
in the Collaboration Phase. 
It is not immediately clear how these last two checks can be used at all. 
However, we prove that when the extension degree $t$ is divisible by the characteristic of the field $F$, each erased symbol can be recovered at each
RN using the aforementioned $t$ checks, at the cost of downloading in
total $n-1$ sub-symbols from $n-2$ surviving nodes and from the other RN. In the first repair scheme, the two RNs exchange their repair data simultaneously (parallel repair), while in the second scheme, one node \emph{waits} to receive the data from the other node before sending out its own repair data (sequential repair). By allowing one node to wait in the Collaboration Phase, we obtain a repair scheme that works for every field extension degree. Each distributed scheme can be easily modified to yield a centralized scheme by removing the Collaboration Phase. 

To identify check equations that involve one codeword symbol $f(\alpha)$ 
but not the other symbol $f(\beta)$, we first introduce a special polynomial $\Qabz$, defined as follows:
\begin{equation}
\label{eq:Q}
\Qabz = \tr\big(z(\beta - \alpha)\big), \quad \alpha \neq \beta.
\end{equation} 
Let $\Kab$ denote the root space of $\Qabz$.
Then
\begin{equation} 
\label{eq:Kab}
\Kab = \left\{z^* \in F \colon \tr(z^*\alpha) = \tr(z^*\beta)\right\}.
\end{equation}

\begin{lemma} 
\label{lem:Kab}
The following statements hold for every $\alpha$ and $\beta$ in $F$, $\alpha \neq \beta$.  
\begin{enumerate}
	\item[(a)] $\Kab \equiv \Kba$. In other words, the polynomial $\Qab$ and 
	the polynomial $\Qba$ have the same root spaces.
	\item[(b)] $\dim_B(\Kab) = \dim_B(\Kba) = t - 1$. 
\end{enumerate}
\end{lemma}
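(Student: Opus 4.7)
The plan is to prove both parts by viewing $\Qabz$ as the composition of a $B$-linear bijection of $F$ with the trace map $\tr\colon F \to B$, and then invoking standard facts from linear algebra.

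For part (a), I would simply observe that since $-1 \in B$ and the trace is $B$-linear,
\[
\Qbaz = \tr\bigl(z(\alpha - \beta)\bigr) = -\tr\bigl(z(\beta - \alpha)\bigr) = -\Qabz.
\]
Multiplication by the nonzero scalar $-1$ does not affect the set of zeros, hence $\Kab = \Kba$. This is the easy part.

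For part (b), the key observation is that $\Kab$ can be written as the preimage of $\ker(\tr)$ under the map $\varphi\colon F \to F$ defined by $\varphi(z) = z(\beta - \alpha)$. Since $\beta - \alpha \neq 0$, this map $\varphi$ is a $B$-linear bijection of $F$, so
\[
\dim_B(\Kab) = \dim_B\bigl(\varphi^{-1}(\ker \tr)\bigr) = \dim_B(\ker \tr).
\]
It then remains to show that $\dim_B(\ker \tr) = t - 1$. For this I would invoke the standard fact that the trace map $\tr\colon F \to B$ is a surjective $B$-linear map (see, e.g., \cite[Ch.~2]{LidlNiederreiter1986}), so by the rank-nullity theorem applied to a $B$-vector space of dimension $t$, the kernel has dimension $t - 1$.

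There is no real obstacle here; the entire argument reduces to two standard facts about the trace (its $B$-linearity and its surjectivity onto $B$). The only subtle point worth stating explicitly is that the surjectivity of $\tr$ is essential: without it, the kernel could in principle have dimension $t$. Surjectivity follows because $\tr$ is a nonzero polynomial function of degree at most $|B|^{t-1}$ on $F$, hence cannot vanish identically on $F$, and since its image is a $B$-subspace of $B$ it must equal $B$.
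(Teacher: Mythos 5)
Your proof is correct and follows essentially the same route as the paper: both identify $\Kab$ as the preimage (equivalently, the scaled copy $\frac{1}{\beta-\alpha}\ker(\tr)$) of the trace kernel under multiplication by $\beta-\alpha$ and conclude $\dim_B(\Kab)=t-1$ from $\dim_B(\ker \tr)=t-1$, the paper simply citing this fact from Lidl--Niederreiter while you additionally prove surjectivity of $\tr$ via the degree bound. Your part (a) via $\Qbaz=-\Qabz$ is a cosmetic variant of the paper's symmetry argument on the defining condition $\tr(z^*\alpha)=\tr(z^*\beta)$; both are fine.
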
 
\begin{proof} 
From \eqref{eq:Kab}, due to symmetry, $\Kab \equiv \Kba$.
As the trace function is a linear mapping from $F$ to $B$, its kernel $K = \{\kappa \in F \colon \tr(\kappa)=0\}$ is a $B$-subspace of dimension $t-1$ (see, for instance~\cite[Thm.~2.23]{LidlNiederreiter1986}). Hence, the root space of $\Qabz$, i.e. $\Kab = \frac{1}{\beta-\alpha}K$, is also a $B$-subspace of dimension $t-1$. 
\end{proof}

We then use a root $z^*$ of the polynomial $\Qabz$ to define a check equation according to \eqref{eq:p}.
\[
p_{z^*,\alpha}(x) = \tr\big(z^*(x - \alpha) \big)/(x - \alpha). 
\]
The following properties of $p_{z^*,\alpha}(x)$ will be used in our subsequent proofs.

\begin{lemma} 
\label{lem:Qp}
Suppose that $\alpha$ and $\beta$ are two distinct elements of $F$,
and $z^*$ is a root of $\Qabz$ or $\Qbaz$ in $F$, i.e. $z^* \in \Kab$. 
Then the following claim holds.
\begin{enumerate}
	\item[(a)] $p_{z^*,\alpha}(\beta) = 0$. 
\end{enumerate}
Moreover, if the extension degree $t$ is divisible by $\cF$ then 
\begin{enumerate}
	\item[(b)] $p_{u,\alpha}(\beta)$ is a root of $\Qabz$ and $\Qbaz$, for every $u \in F$, and so is $1/(\beta-\alpha)$.
\end{enumerate}
\end{lemma}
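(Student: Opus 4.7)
For part (a), the plan is to just unfold the definitions. By definition $p_{z^*,\alpha}(x) = \tr\big(z^*(x-\alpha)\big)/(x-\alpha)$, so evaluating at $\beta$ gives $p_{z^*,\alpha}(\beta) = \tr\big(z^*(\beta-\alpha)\big)/(\beta-\alpha)$. The numerator is exactly $\Qabz$ evaluated at $z^*$. Since $z^* \in \Kab$ by hypothesis, this numerator vanishes, and since $\alpha \neq \beta$ the denominator is nonzero, so $p_{z^*,\alpha}(\beta)=0$. This is immediate and the main content is just noticing the definitions line up.

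For part (b), the idea is to compute $\Qab$ evaluated at $p_{u,\alpha}(\beta)$ and show it equals zero; by Lemma~\ref{lem:Kab}(a), $\Qab$ and $\Qba$ share their root space, so handling one of them suffices. Evaluating, we get
\[
\Qab\big(p_{u,\alpha}(\beta)\big) = \tr\Big(p_{u,\alpha}(\beta)\,(\beta-\alpha)\Big) = \tr\Big(\tr\big(u(\beta-\alpha)\big)\Big),
\]
where in the last equality the factor $(\beta-\alpha)$ cancels against the denominator of $p_{u,\alpha}(\beta)$.

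The key observation is therefore that $\tr\big(u(\beta-\alpha)\big) \in B$ lies in the base field, and one needs to evaluate the trace of an element of $B$. For any $b \in B$ we have $b^{|B|^i}=b$ for all $i$, so $\tr(b) = \sum_{i=0}^{t-1} b^{|B|^i} = t\,b$. The hypothesis that the characteristic $p$ of $F$ (and of $B$) divides $t$ then forces $t\,b = 0$ in $B$. Applying this to $b = \tr\big(u(\beta-\alpha)\big)$ yields $\Qab\big(p_{u,\alpha}(\beta)\big)=0$, and the claim for $\Qba$ follows from Lemma~\ref{lem:Kab}(a).

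I do not expect any real obstacle: (a) is pure unfolding, and (b) reduces, after one substitution, to the standard identity $\tr|_{B} = t\cdot \mathrm{id}_B$ together with the divisibility hypothesis $p \mid t$. The only thing to be careful about is to invoke Lemma~\ref{lem:Kab}(a) to cover both $\Qab$ and $\Qba$ with a single computation, rather than redoing the calculation twice.
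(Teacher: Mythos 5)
Your proposal is correct and follows essentially the same route as the paper: part (a) by unfolding the definitions of $p_{z^*,\alpha}$ and $\Qab$, and part (b) by observing that $p_{u,\alpha}(\beta)=b/(\beta-\alpha)$ with $b=\tr\big(u(\beta-\alpha)\big)\in B$, so that $\Qab\big(p_{u,\alpha}(\beta)\big)=\tr(b)=tb=0$ under the divisibility hypothesis, with Lemma~\ref{lem:Kab}(a) covering $\Qba$ as well. No gaps.
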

\begin{proof} 
Note that according to Lemma~\ref{lem:Kab}~(a), the root spaces of $\Qabz$ and $\Qbaz$
are the same. The first claim is clear based on the definitions of $\Qabz$ and $p_{z^*,\alpha}(x)$.
For the second claim, it is sufficient to show that both $1/(\beta-\alpha)$ and $p_{u,\alpha}(\beta)$ is a root of $\Qabz$.
%For simplicity, let $\Delta = \beta-\alpha$ and $b = \tr\big(u(\beta - \alpha)\big) \in B$. By definition of $p_{u,\alpha}(x)$, we have
%\[
%p_{u,\alpha}(\beta) = \tr\big(u(\beta - \alpha)\big)/(\beta - \alpha) =b/\Delta.
By the definition of $\Qabz$, we have 
$\Qabz = \tr\big(z(\beta - \alpha) \big)$.
Therefore, \vspace{-10pt}
\[
\Qab\big(1/(\beta-\alpha)\big) = \tr(1) = \sum_{i = 0}^{t-1} 1 = t = 0, 
\]
whenever $t$ is divisible by the characteristic of the field.
Hence $1/(\beta-\alpha)$ is a root of $\Qabz$. Since $p_{u,\alpha}(\beta) = b/(\beta - \alpha)$, where $b = \tr\big(u(\beta - \alpha)\big) \in B$, we deduce that 
\[
\Qab(p_{u,\alpha}(\beta)) = b \Qab\big(1/(\beta-\alpha)\big) = 0,
\] 
because of the linearity of $\Qabz$. Thus, $p_{u,\alpha}(\beta)$ is also a root of $\Qabz$.  
\end{proof} 

The following lemma restates what is shown in Section~\ref{subsec:GW}.  

\begin{lemma} 
\label{lem:trace}
For $\alpha \neq \alpha^*$ and $u \in F$, \vspace{-5pt}
\[
%\begin{split}
\tr\big( p_{u,\alpha^*}(\alpha)\fa\big) 
%&
= \tr\big( u(\alpha-\alpha^*) \big) \times \tr\Big(\dfrac{\fa}{\alpha - \alpha^*} \Big).%\\
%&\in \spn\Big(\left\{ \tr\Big(\dfrac{\fa}{\alpha - \alpha^*} \Big) \right\}\Big).
%\end{split}
\]  
Hence, the trace $\tr\big( p_{u,\alpha^*}(\alpha)\fa\big)$ can be determined by downloading the repair trace $\tr\big(\frac{\fa}{\alpha - \alpha^*} \big)$ from the node storing $\fa$. %This sub-symbol does not depend on $u$.  
\end{lemma}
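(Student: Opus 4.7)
The plan is a direct unfolding of definitions plus one appeal to $B$-linearity of the trace. First I would substitute $x = \alpha$ in the definition $p_{u,\alpha^*}(x) = \tr\big(u(x-\alpha^*)\big)/(x-\alpha^*)$, which is legitimate since $\alpha \neq \alpha^*$, to get
\[
p_{u,\alpha^*}(\alpha)\,f(\alpha) \;=\; \tr\big(u(\alpha-\alpha^*)\big)\cdot \dfrac{f(\alpha)}{\alpha-\alpha^*}.
\]
The critical observation is that the first factor on the right, $\tr\big(u(\alpha-\alpha^*)\big)$, lies in the subfield $B$, since the image of $\tr$ is $B$ by definition.

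Next I would apply $\tr$ to both sides and use the fact that $\tr\colon F\to B$ is $B$-linear (indeed a $B$-linear map, see \cite[Ch.~2]{LidlNiederreiter1986}). Since $\tr\big(u(\alpha-\alpha^*)\big) \in B$, it can be pulled out of the outer trace as a scalar, yielding
\[
\tr\big(p_{u,\alpha^*}(\alpha)\,f(\alpha)\big) \;=\; \tr\big(u(\alpha-\alpha^*)\big)\,\tr\!\Big(\dfrac{f(\alpha)}{\alpha-\alpha^*}\Big),
\]
which is exactly the claimed identity. The final sentence of the lemma (that downloading the single sub-symbol $\tr\big(f(\alpha)/(\alpha-\alpha^*)\big)$ from the node storing $f(\alpha)$ suffices, and that this quantity does not depend on $u$) is then immediate from inspection of the right-hand side: the $u$-dependent factor $\tr\big(u(\alpha-\alpha^*)\big)$ can be computed locally at the replacement node from the known quantities $u$, $\alpha$, and $\alpha^*$, whereas the only term requiring communication from the helper node is the $u$-free trace $\tr\big(f(\alpha)/(\alpha-\alpha^*)\big)$, a single element of $B$.

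There is no real obstacle here; the lemma is a restatement of the mechanism already used in Section~\ref{subsec:GW}. The only subtlety worth flagging explicitly in the write-up is the justification for pulling $\tr\big(u(\alpha-\alpha^*)\big)$ out of the trace, which rests on the fact that it belongs to $B$ (the base field of linearity), not merely to $F$ — this is where the definition of the trace function as a map into $B$ is essential.
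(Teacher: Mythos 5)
Your proposal is correct and follows exactly the argument the paper relies on: the lemma is stated without a separate proof because it ``restates what is shown in Section~II-C,'' where the same identity is obtained by substituting the definition of $p_{u,\alpha^*}(x)$ at $x=\alpha$ and pulling the scalar $\tr\big(u(\alpha-\alpha^*)\big)\in B$ out of the outer trace by $B$-linearity. Your explicit flagging of why that factor may be extracted (it lies in $B$, the field of linearity) is precisely the one subtlety involved, so the write-up matches the paper's reasoning.
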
 

\subsection{A One-Round Distributed Repair Scheme for Two Erasures}
\label{subsec:depth_one}

The scheme comprises of two phases, the Download Phase, where each RN contacts and downloads data from the other $n-2$ available nodes, and the 
Collaboration Phase, where the two RNs exchange the data, based on what
they receive earlier in the Download Phase. The main task is to design
the data to be exchanged during the two phases. This task can be completed via 
a selection of proper check polynomials to be used by each RN. 
We discuss the generation of these polynomials below. 

Let $\Ksb$ be the root space of the polynomial $\qsbz$. By Lemma~\ref{lem:Kab}~(b), $\dim_B(\Ksb) = t-1$. Let $U = \{u_1,u_2,\ldots,u_{t-1}\} \subseteq F$ and $V = \{v_1,v_2,\ldots,v_{t-1}\} \subseteq F$ be two arbitrary bases of $\Ksb$ over $B$.
We extend $U$ and $V$ to obtain the two bases $U' = \{u_1,\ldots,u_t\}$ and $V' = \{v_1,\ldots,v_t\}$ of $F$ over $B$, respectively.
Note that while it is not crucial to choose two different sets $U$ and $V$, it is not compulsory to use the same set $U \equiv V$ either.
For $i \in [t]$, we set 
\begin{equation} 
\label{eq:pix}
p_i(x) \define p_{u_i,\alpha^*}(x) = \tr\big(u_i(x - \alpha^*) \big) / (x - \alpha^*),
\end{equation} 
\begin{equation} 
\label{eq:qix}
q_i(x) \define p_{v_i,\overline{\alpha}}(x) = \tr\big(v_i(x - \overline{\alpha}) \big)/(x - \overline{\alpha}).
\end{equation}
 
\nin\textbf{Download Phase.}
In this phase, each RN contacts $n-2$ available nodes to download repair
data. To determine what to download, the RN for $\fas$ uses the
first $t-1$ checks $p_1,\ldots,p_{t-1}$ to construct the following $t-1$ repair equations. %For $i = 1,\ldots,t-1$, 
\begin{equation}
\label{eq:pi} 
\tr\big(\pias\fas\big) = - \hspace{-10pt} \sum_{\alpha \in A \setminus \{\alpha^*\}} \hspace{-10pt} \tr\big(\pia\fa\big),\ i \in [t-1].
\end{equation} 
Similarly, the RN for $\fab$ creates the following repair equations. %for $t = 1,\ldots, t-1$.  
\begin{equation}
\label{eq:qi}
\tr\big(\qiab\fab\big) = - \hspace{-10pt} \sum_{\alpha \in A \setminus \{\overline{\alpha}\}} \hspace{-10pt} \tr\big(\qia\fa\big),\ i \in [t-1].
\end{equation} 
By Lemma~\ref{lem:Qp}~(a), we have $\piab = 0$ and $\qias = 0$ for
all $i = 1,\ldots,t-1$. Therefore, the right-hand sides of 
\eqref{eq:pi} and \eqref{eq:qi} do not involve $\fas$ and $\fab$. 
As a result, each RN can recover $t-1$ independent target traces of the corresponding
erased symbol by downloading $n-2$ sub-symbols (repair traces) from the available nodes. Corollary~\ref{cr:downloading_phase}, which follows directly from Lemma~\ref{lem:trace}, formally states this
fact. 

\begin{corollary} 
\label{cr:downloading_phase}
In the Download Phase, the replacement node for $\fas$ can recover $t-1$ independent target traces, namely
$\tr\big(\poas\fas \big),\ldots,\tr\big(\ptmoas\fas \big)$, by downloading $n-2$
repair traces, i.e. $\tr\Big(\frac{\fa}{\alpha - \alpha^*} \Big)$ from the available
node storing $\fa$, for all $\alpha \in A \setminus \{\alpha^*, \overline{\alpha}\}$. A similar statement holds for the replacement node for $\fab$, where the checks
are $q_i$ and the repair traces are $\tr\Big(\frac{\fa}{\alpha - \overline{\alpha}} \Big)$. 
\end{corollary} 

\nin\textbf{Collaboration Phase.}
As one more independent target trace of each erased symbol is needed for a complete recovery, the two RNs create two additional repair equations for $\fas$, $\fab$, respectively. %\vspace{-10pt}
\begin{equation}
\label{eq:pt}
\tr\big(\ptas\fas\big) + \tr\big(\ptab\fab\big)%\\ 
= - \hspace{-10pt} \sum_{\alpha \in A \setminus \{\alpha^*,\overline{\alpha}\}} \hspace{-10pt} \tr\big(\pta\fa\big).
\end{equation}
%\vspace{-15pt}
\begin{equation}
\label{eq:qt}
\tr\big(\qtab\fab\big) + \tr\big(\qtas\fas\big)%\\ 
= - \hspace{-10pt} \sum_{\alpha \in A \setminus \{\alpha^*,\overline{\alpha}\}} \hspace{-10pt} \tr\big(\qta\fa\big).
 %\vspace{-5pt}
\end{equation}
It is clear that from the repair traces $\tr\big(\frac{\fa}{\alpha - \alpha^*} \big)$, $\alpha \in A \setminus \{\alpha^*, \overline{\alpha}\}$, retrieved in the
Download Phase, the RHS of~\eqref{eq:pt} can be determined. 
However, to extract the target trace $\tr\big(\ptas\fas\big)$, the RN for $\fas$ needs to know the term $\tr\big(\ptab\fab\big)$, which would have been determined based on the repair trace $\tr\big( \frac{\fab}{\overline{\alpha} - \alpha^*} \big)$ downloaded from the node storing $\fab$ if it had not failed.
The following lemma states that for certain field extension degrees, this missing piece of information can be created by the RN for $\fab$ based on what it obtains in the Download Phase. It can then send this repair trace to the RN for $\fas$ to help complete the recovery of that symbol. A similar scenario also holds for $\fab$.    

\begin{lemma} 
\label{lem:dependence}
If the field expansion degree $t$ is divisible by $\cF$ then both $\ptab$ and $1/(\overline{\alpha} - \alpha^*)$ can be written as $B$-linear combinations of elements in $V = \{\qiab \colon i \in [t-1]\} = \{v_1,\ldots,v_{t-1}\}$. Likewise, 
both $\qtas$ and $1/(\alpha^* - \overline{\alpha})$ can be written as $B$-linear combinations of elements in $U = \{\pias \colon i \in [t-1]\} = \{u_1,\ldots,u_{t-1}\}$. 
\end{lemma} 
\begin{proof} 
Because of symmetry, it suffices to just prove the first statement of the lemma. 
By Lemma~\ref{lem:p}~(b), we have $\qiab = v_i$, for every $i \in [t-1]$. Therefore, 
$\{\qiab \colon i \in [t-1]\} = \{v_1,\ldots,v_{t-1}\} = V$, which is a basis of the root space $\Ksb$ of the polynomial $\qsbz$. 
Therefore, in order to show that $\ptab$ and $1/(\overline{\alpha} - \alpha^*)$ are $B$-linear combinations of elements in $V$, it is sufficient to prove that they are roots of $\qsbz$.
But this follows immediately from Lemma~\ref{lem:Qp}~(b), because $p_t(x)$ equals $p_{u_t,\alpha^*}(x)$ by its definition in \eqref{eq:pix}. 
\end{proof} 

From the linearity of the trace function, we arrive at the following corollary of Lemma~\ref{lem:dependence}. 

\begin{corollary}
\label{cr:dependence}
If the field expansion degree $t$ is divisible by $\cF$
then the trace $\tr\big(\ptab\fab\big)$ and the repair trace $\tr\big(\frac{\fab}{\overline{\alpha} - \alpha^*}\big)$ can be written as $B$-linear combinations of the target traces in
$\left\{\tr\big(\qiab\fab \big) \colon i \in [t-1]\right\}$. Also, the trace $\tr\big(\qtas\fas\big)$ the repair trace $\tr\big(\frac{\fas}{\alpha^*-\overline{\alpha}}\big)$ can be
written as $B$-linear combinations of the target traces in $\left\{\tr\big(\pias\fas\big) \colon i \in [t-1]\right\}$. 
Moreover, the coefficients of these combinations do not depend on $f$. 
\end{corollary} \vspace{-10pt}

\begin{figure}[htb]
\centering
\includegraphics[scale=.9]{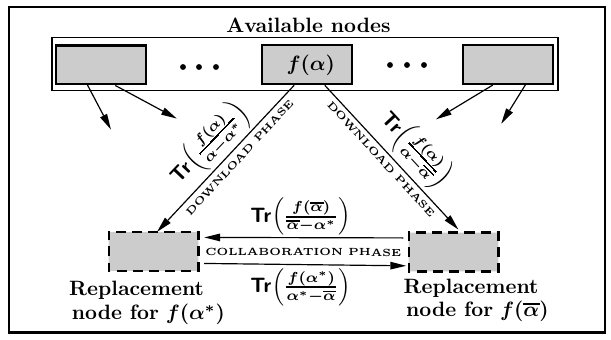}
\caption{Illustration of the \emph{one-round} distributed repair scheme for two erasures in Reed-Solomon codes. In the Collaboration Phase, the two repair
traces are exchanged \emph{simultaneously}.}
\label{fig:distributed_two}
\vspace{-5pt}
\end{figure}

Recall that by Lemma~\ref{lem:trace}, the traces $\tr\big(\ptab\fab\big)$ and $\tr\big(\qtas\fas\big)$ can be determined based on the repair traces $\tr\Big(\frac{\fab}{\overline{\alpha} - \alpha^*} \Big)$
and $\tr\Big(\frac{\fas}{\alpha^* - \overline{\alpha}} \Big)$, respectively. Therefore, in the Collaboration Phase, the RNs can send their repair data to each other, which matches precisely what they would have sent if they had not failed.
The graphical illustration of the two phases of this scheme is depicted in Fig.~\ref{fig:distributed_two}. We refer to this as a \emph{one-round} distributed repair scheme because in the Collaboration Phase, two RNs exchange repair data in one round and do not have to wait for each other.
It is also clear that in this scheme, each RN does \emph{not} need to know which basis/checks the other RN is using. 

\begin{lemma} 
\label{lem:collaboration_phase}
In the Collaboration Phase, the replacement node for $\fas$ can recover its $t$-th target trace $\tr\big(\ptas\fas \big)$, by downloading one repair trace $\tr\Big(\frac{\fab}{\overline{\alpha} - \alpha^*} \Big)$ from the replacement node for $\fab$. Similarly, the replacement node for $\fab$ can recover its $t$-th target trace $\tr\big(\qtab\fab \big)$ by downloading one repair trace $\tr\Big(\frac{\fas}{\alpha^* - \overline{\alpha}} \Big)$ from the replacement node for $\fas$. 
\end{lemma} 

Before stating our main theorem of this section,
we summarize in Table~\ref{tab:two_erasures} the way we use different types of checks for the recovery of two erased symbols $\fas$ and $\fab$.  

\begin{theorem} 
\label{thm:depth_one}
The one-round distributed repair scheme can be used to repair any two erased symbols of a Reed-Solomon codes $\rsk$ at a repair bandwidth of $n-1$ sub-symbols per erased symbol, given that $n-k \geq |B|^{t-1}$ and
$\cF$ divides $t$.% $\text{char}(F) | t$.  
\end{theorem}  
\begin{proof}\noindent 
By Lemma~\ref{lem:p}~(b), $\pias = u_i$ and $\qiab = v_i$ for $i \in [t]$.
Recall that the sets $U' = \{u_1,\ldots,u_t\}$ and $V' = \{v_1,\ldots,v_t\}$ are both  linearly independent over $B$. Therefore, after the two phases, each RN obtains $t$ independent target traces for each erased symbol, $\tr\big(u_i\fas\big)$, for $\fas$, and $\tr\big(v_i\fab\big)$, for $\fab$, for all $i \in [t]$.  
Thus, each erased symbol will have $t$ independent traces for its recovery.
Each RN downloads $n-2$ sub-symbols in the Download Phase and one 
sub-symbol in the Collaboration Phase, according to Corollary~\ref{cr:downloading_phase}
and Lemma~\ref{lem:collaboration_phase}, which sum up to a repair bandwidth of $n-1$ sub-symbols. 
\end{proof} 
\vspace{-5pt}

\begin{table}[htb]
\centering
\begin{tabular}{|l|c|c|l|}\hline
\diagbox[width=10em]{\textbf{Checks}}{\textbf{Erased}\\\textbf{Positions}} & $\alpha^*$ & $\overline{\alpha}$
& \textbf{Purpose}\\ 
\hline
$\po,\ldots,\ptmo$ & $\times$ & $\cdot$ & $\to$ target traces for $\fas$ \\ \hline
$\qo,\ldots,\qtmo$ & $\cdot$ & $\times$ & $\to$ target traces for $\fab$ \\ \hhline{|=|=|=|=|}
$\pt$ & $\times$ & $\times$ & $\to$ target trace for $\fas$ \\ \hline
$\qt$ & $\times$ & $\times$ & $\to$ target trace for $\fab$ \\ \hline
\end{tabular}
\caption{The list of checks used for repairing two erased symbols $f(\alpha^*)$ and $\fab$. A cross ``$\times$''
means a nonzero value, while a dot ``$\cdot$'' means a zero value. 
There are $2t-2$ checks, $t-1$ involving only $\fas$ and $t-1$ involving only $\fab$. The target traces generated by these checks also help to produce the second terms in the LHS of the repair equations \eqref{eq:pt} and \eqref{eq:qt} corresponding to $p_t$ and $q_t$, respectively. The
$t$-th target trace for each erased symbol can then be extracted from~\eqref{eq:pt} and \eqref{eq:qt}.}
\label{tab:two_erasures}
\vspace{-10pt}
\end{table}

\begin{example}
\label{ex:2}
Let $q = 2$, $t = 2$, $n = 4$, and $k = 2$. 
Let $\ff_4 = \{0,1,\xi,\xi^2\}$, where $\xi^2+\xi+1 = 0$. Then $\{1,\xi\}$ is a
basis of $F=\ff_4$ over $B=\ff_2$. Moreover, each element $\ba \in \ff_4$ can be represented
by a pair of bits $(a_1,a_2)$ where $\ba = a_1 + a_2\xi$. Suppose the stored file is
$(\ba,\bb) \in \ff_4^2$. To devise a systematic RS code, 
we associate with each file $(\ba,\bb) \in \ff_4^2$ a polynomial $f(x)=f_{\ba,\bb}(x)\define \ba + (\bb-\ba)x$.
%, which is of degree at most $1 = k - 1$. 
%The evaluations of this polynomial at the four elements of the field $\ff_4$ are given as follows.
We have $f(0) = a_1 + a_2\xi = \ba$, $f(1) = b_1 + b_2\xi = \bb$,
$f(\xi) = (a_1+a_2+b_2) + (a_1+b_1+b_2)\xi$, and
$f(\xi^2) = (a_2+b_1+b_2) + (a_1+a_2+b_1)\xi$.
The four codeword symbols $f(0)$, $f(1)$, $f(\xi)$, and $f(\xi^2)$ are stored at Node~1,
Node~2, Node~3, and Node~4, respectively, as depicted in 
%Fig.~\ref{fig:distributed_two_example}. 
Fig.~\ref{fig:toy_example}.

\textbf{Download Phase.}
Set $Q_{1,\xi}(z) \define \tr\big( z(\xi-1)\big) = \xi z^2 + \xi^2z$.
We choose two bases $U = V = \{\xi\}$ of the root space of $Q_{1,\xi}(z)$. 
Set $p_1(x) = \tr\big(\xi(x-1)\big) / (x-1) = \xi^2x + 1$, and 
$q_1(x) = \tr\big(\xi(x-\xi)\big) / (x-\xi) = \xi^2x+\xi^2$. 
RN2 (RN for Node~2) downloads two bits from the two available nodes, namely $a_2 = \tr\big(f(0) / (0-1)\big)$ from Node~1 and
$a_2+b_1+b_2 = \tr\big(f(\xi^2) / (\xi^2-1)\big)$ from Node~4. 
It then uses \eqref{eq:pi} to obtain the first trace
$b_1+b_2 = \tr(\xi f(1)) = 1\times a_2 + 1\times(a_2+b_1+b_2)$. 
Similarly, RN3 (RN for Node~3) also downloads 
$a_1 = \tr\big(f(0) / (0-\xi)\big)$ from Node~1 and
$a_1+a_2+b_1 = \tr\big(f(\xi^2) / (\xi^2-\xi)\big)$ from Node~4.
It then recovers $a_2+b_1 = \tr(\xi f(\xi)) = 1\times a_1 + 1\times (a_1+a_2+b_1)$.

\textbf{Collaboration Phase.} 
RN2 sends $b_1+b_2$ over to the RN3, which, by Lemma~\ref{lem:dependence}, is the same as $\tr\big(f(1)/(1-\xi)\big)$. Conversely, RN3 sends $a_2+b_1$ over to
RN2, which is the same as $\tr\big(f(\xi)/(\xi-1)\big)$. 
$U$ and $V$ are extended to the basis $\{\xi,\xi^2\}$ of $\ff_4$ over $\ff_2$.
Set $p_2(x) = \tr\big(\xi^2(x-1)\big) / (x-1) = \xi x + 1$, and 
$q_2(x) = \tr\big(\xi^2(x-\xi)\big) / (x-\xi) = \xi x$.
Now, RN2 has three repair traces to recover
the second trace of $f(1)$ using $p_2$, i.e. $b_1 = \tr\big(\xi^2f(1)\big) = 
1\times a_2 + 0\times(a_2+b_1+b_2) + 1\times (a_2 + b_1)$.
Based on the two traces $b_1+b_2$ and $b_1$, the erased symbol $\bb=f(1)$ can be recovered. 
Similarly, RN3 can recover the second trace of $f(\xi)$
as $a_1+a_2+b_2 = \tr\big(\xi^2f(\xi)\big) = 0\times(a_1) + 1\times(b_1+b_2) + 1\times(a_1+a_2+b_1)$, and then can recover $f(\xi)$ completely. 
\end{example}    

\subsection{A Two-Round Distributed Repair Scheme for Two Erasures}
\label{subsec:depth_two}

We modify the one-round repair scheme developed in the previous subsection to obtain
a two-round scheme that works for \emph{all} field extension degrees. 
We still generate the checks $p_1,\ldots,p_t$ and $q_1,\ldots,q_t$ as in the first scheme,
given by \eqref{eq:pix} and \eqref{eq:qix}, respectively. However, the RN for $\fas$,
instead of $p_i$, uses the following checks \vspace{-5pt}
\begin{equation} 
\label{eq:ppi}
\psix \define \tau p_i(x),\quad \text{ for all } i=1,\ldots,t. \vspace{-5pt}
\end{equation} 
where $\tau$ is chosen so that $\tau \in F \setminus \{0\}$ and $\frac{\tau}{\overline{\alpha} - \alpha^*} \in \Ksb = \spn\big(\{\qoab,\ldots,\qtmoab\}\big)$. For instance, we can set $\tau = (\overline{\alpha} - \alpha^*)\qoab$.
Clearly, $\deg(\psi)=\deg(p_i) < n-k$ and hence, $\psix$ serve as check polynomials of the code. Note that for this scheme to work, the RN for $\fas$ must know $v_1 = \qoab$, so that it can compute $\tau$. This assumption is satisfied as long as every RN uses a known deterministic repair scheme, which should be the case in practice.
 
\begin{lemma} 
\label{lem:ps}
The check polynomials $\psix$ defined as in \eqref{eq:ppi} satisfy the following properties. 
\begin{itemize}
	\item[(P1)] $\psoab = \pstwab = \cdots =  \pstmoab = 0$.
	\item[(P2)] $\big\{\psoas,\ldots,\pstas\big\}$ is a basis of $F$ over $B$. 
	\item[(P3)] $\tr\big(\pstab\fab\big)$ and $\tr\Big(\frac{\tau\fab}{\overline{\alpha} - \alpha^*}\Big)$ are $B$-linear combinations of elements in $\big\{\qoab\fab,\ldots,\qtmoab\fab\big\}$. Moreover, the coefficients in these combinations do not depend on $f$.
\end{itemize}
\end{lemma}  
\begin{proof} 
Property (P1) holds since $p^*_i(\overline{\alpha}) = \tau \piab$ and $\piab = 0$ for all $i \in [t-1]$ by Lemma~\ref{lem:Qp}~(a).  
%Property (P2) is obvious.   
Property (P2) follows from the fact that $\psias = \tau\pias$, $\tau \neq 0$, 
and that $U' = \{u_1,\ldots,u_t\} = \{\poas,\ldots,\ptas\}$ is a $B$-basis of $F$.  
Property (P3) holds due to the definition of $\tau$ and the linearity of trace. 
\end{proof} 

\nin\textbf{Download Phase.}
In this phase, the RN for $\fas$ uses the first $t-1$ checks $p^*_1,\ldots,p^*_{t-1}$ to construct the $t-1$ repair equations. For $i = 1,\ldots,t-1$, \vspace{-5pt}
\begin{multline}
\label{eq:psi} 
\tr\big(\psias\fas\big) = -\sum_{\alpha \in A \setminus \{\alpha^*\}} \tr\big(\psia\fa\big)\\
=-\sum_{\alpha \in A \setminus \{\alpha^*\}} 
\tr\big(u_i(\alpha-\alpha^*)\big) \times \tr\Big(\frac{\tau\fa}{\alpha-\alpha^*}\Big). \vspace{-5pt}
\end{multline}
By Lemma~\ref{lem:ps}~(P1), the RHS of \eqref{eq:psi} does not involve $\fab$. Thus, the RN for $\fas$ can determine $t-1$ target traces $\tr\big(\psias\fas\big)$, $i \in [t-1]$, of $\fas$
by downloading $n-2$ sub-symbols $\tr\Big(\frac{\tau\fa}{\alpha-\alpha^*}\Big)$ from
the available nodes storing $\fa$, $\alpha \in A \setminus \{\alpha^*, \overline{\alpha}\}$. 
The RN for $\fab$ follows the same procedure as in the first scheme (Section~\ref{subsec:depth_one}).\\ 

\nin\textbf{Collaboration Phase.}
The last repair equation for $\fas$ is \vspace{-8pt} 
\begin{multline}
\label{eq:pst}
\tr\big(\pstas\fas\big) + \tr\big(\pstab\fab\big)\\
=-\sum_{\alpha \in A \setminus \{\alpha^*,\overline{\alpha}\}} \tr\big(\psta\fa\big)\\ 
= -\sum_{\alpha \in A \setminus \{\alpha^*,\overline{\alpha}\}} 
\tr\big(u_t(\alpha-\alpha^*)\big) \times \tr\Big(\frac{\tau\fa}{\alpha-\alpha^*}\Big). %\vspace{-3pt}
\end{multline}

Clearly, the RN for $\fas$ can compute the RHS of \eqref{eq:pst} based on what
it downloaded in the Download Phase. To obtain the last target trace $\tr\big(\pstas\fas\big)$, it computes $\tr\big(\pstab\fab\big) = \tr\big(\ut(\overline{\alpha} - \alpha^*)\big)\times\tr\Big(\frac{\tau\fab}{\overline{\alpha} - \alpha^*}\Big)$, where the repair trace $\tr\Big(\frac{\tau\fab}{\overline{\alpha} - \alpha^*}\Big)$ can be downloaded from the RN for $\fab$. This 
is possible because the repair trace is a $B$-linear combination of $\big\{\qoab\fab,\ldots,\qtmoab\fab\big\}$,
due to Lemma~\ref{lem:ps}~(P3), which is already available at the RN for $\fab$. 
Then, by Lemma~\ref{lem:ps}~(P2), the RN for $\fas$ has $t$ independent
traces of $\fas$ to recover this lost symbol. As $\fas$ has been recovered, 
the RN for $\fab$ downloads the repair trace $\tr\Big(\frac{\fas}{\alpha^* -\overline{\alpha}}\Big)$ from the RN for $\fas$ to compute the target trace $\tr\big(\qtab\fab\big)$,
and then can recover $\fab$ completely. 
Note that the RN for $\fas$ has to first receive the repair trace from the RN for $\fab$ before computing and sending out its repair trace for $\fab$ (see Fig.~\ref{fig:distributed_two_depth_two}). 
Theorem~\ref{thm:depth_two} summarizes the discussion. 

\vspace{-5pt}
\begin{figure}[htb]
\centering
\includegraphics[scale=.9]{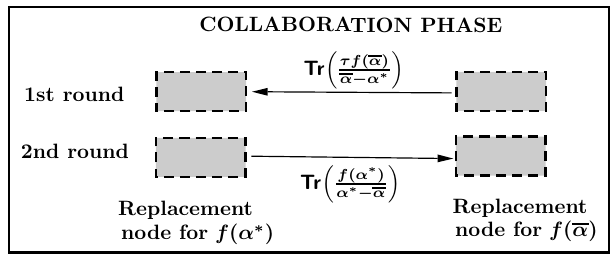}
\caption{Illustration of the Collaboration Phase in the \emph{two-round} distributed repair scheme of two erasures for Reed-Solomon codes. The two repair traces are exchanged \emph{sequentially}.}
\label{fig:distributed_two_depth_two}
\vspace{-10pt}
\end{figure}

\begin{theorem} 
\label{thm:depth_two}
The two-round distributed repair scheme can be used to repair any two erased symbols of a Reed-Solomon codes $\rsk$ at a repair bandwidth of $n-1$ sub-symbols per erased symbol, for any extension degree $t$, given that $n-k \geq |B|^{t-1}$.  
\end{theorem}

\begin{remark}
\label{rm:distributedII}
In our repair scheme, each RN uses a bandwidth of $n-1$ sub-symbols. In a naive scheme, one RN first downloads $kt$ sub-symbols from a set of $k$ available nodes, recovers both erased symbols, and then sends the corresponding symbol to the other RN. The total bandwidth used is $kt+t$, which is worse than the method we described if $\frac{k+1}{n-1} > \frac{2}{t}$, i.e. when $t$ is sufficiently large or when the code has high rate.
%By~\cite[Corollary~1]{DauMilenkovic2017}, when $n = |B|^t$ and $n - k = |B|^{t-1}$, every linear single-erasure repair scheme over $B$ of a Reed-Solomon code $\rsk$ requires a bandwidth of at least $n-1$ sub-symbols in $B$. Hence, the two proposed distributed schemes are indeed \emph{bandwidth-optimal} for these parameters.    
\end{remark}

\begin{example}
\label{ex:2}
Let $B = \text{GF}(2)$, $F = \text{GF}(8)$, $n = |F| = 8$, and $k = n(1-1/|B|) = 4$. 
The extension degree is $t = 3$, which is \emph{not} divisible by the characteristic of the fields.
Let $\xi$ be a primitive element of $F$ that satisfies $\xi^3 + \xi + 1 = 0$. Let $A = \{0,1,\xi,\xi^2,\ldots,\xi^{6}\} \equiv F$ be the set of evaluation points and $\C = \rsk$ be the Reed-Solomon code defined as in Definition~\ref{def:RS}.

Let $f(x)$ be a polynomial over $F[x]$ of degree at most $3$. Then $\bc_f
=\big(f(0),f(1),\ldots,f(\xi^{6}) \big)$ is a codeword of $\C$. 
Suppose that the first two codeword symbols $f(0)$ and $f(1)$ are erased.
So $\alpha^* = 0$ and $\overline{\alpha} = 1$. 
We now demonstrate how to construct a two-round distributed repair scheme for
these two erasures. 
Firstly, we have $\qsbz = \tr\big( z(\overline{\alpha} - \alpha^*) \big)
= \tr( z ) = z^4 + z^2 + z$. 
We then choose the bases $U$ and $V$ for the root space 
$\Ksb$ of $\qsbz$ as $U = V = \{\xi,\xi^2\}$.
These sets $U$ and $V$ can be completed to $\ft$-bases $U' = V' = \{\xi,\xi^2,\xi^3\}$ of $F$.
The checks $p_i$ for $f(0)$ and $q_i$ for $f(1)$, $i = 1,2,3$, are constructed as follows. \vspace{-5pt}
\begin{alignat*}{2}
p_1(x) &= \tr\big(\xi(x-0)\big)/(x-0) & &= \xi^4x^3+\xi^2x+\xi,\\
p_2(x) &= \tr\big(\xi^2(x-0)\big)/(x-0) & &= \xi x^3+\xi^4x+\xi^2,\\
p_3(x) &= \tr\big(\xi^3(x-0)\big)/(x-0) & &= \xi^5 x^3 + \xi^6 x + \xi^3\\
q_1(x) &= \tr\big(\xi(x-1)\big)/(x-1) & &= \xi^4x^3 + \xi^4x^2 + \xi x,\\
q_2(x) &= \tr\big(\xi^2(x-1)\big)/(x-1) & &= \xi x^3 + \xi x^2 + \xi^2 x,\\
q_3(x) &= \tr\big(\xi^3(x-1)\big)/(x-1) & &= \xi^5 x^3 + \xi^5 x^2 + \xi x + 1.
\end{alignat*}
Note that in the two-round scheme, we use $p^*_i(x)$ instead of $p_i(x)$. 
Since $\tau = (1-0)q_1(1) = \xi$, we can determine $p^*_i(x)$ as follows. 
\begin{alignat*}{2}
p^*_1(x) &= \xi p_1(x) & &= \xi^5x^3+\xi^3x+\xi^2,\\
p^*_2(x) &= \xi p_2(x) & &= \xi^2 x^3+\xi^5x+\xi^3,\\
p^*_3(x) &= \xi p_3(x) & &= \xi^6 x^3 + x + \xi^4.
\end{alignat*}
The coordinates of the checks are given in Table~\ref{tab:checks}. These coordinates are precisely the coefficients to be used in the repair equations. The reader may refer to the caption of the table
for the explanation of how this particular scheme works. 

\begin{table}[htb]
\normalsize
\centering
\begin{tabular}{|c|c|c|c|c|c|c|c|c|}
\hline
$A$ & $\boldsymbol{\alpha^* = 0}$ & $\boldsymbol{\overline{\alpha} = 1}$ & $\xi$ & $\xi^2$ & $\xi^3$ & $\xi^4$ & $\xi^5$ & $\xi^6$\\
\hhline{|=|=|=|=|=|=|=|=|=|}
$p^*_1$ & $\boldsymbol{\xi^2}$ & $\cdot$ & $\cdot$ & $\xi^6$ & $\cdot$ &  $\xi^4$ & $\xi^3$ & $\xi^2$\\
\hline
$p^*_2$ & $\boldsymbol{\xi^3}$ & $\cdot$ & $1$ & $\cdot$ & $\xi^5$ & $\xi^4$ & $\xi^3$ & $\cdot$\\
\hline
$p^*_3$ & $\boldsymbol{\xi^4}$ & $\boldsymbol{\xi}$ & $\cdot$ & $\xi^6$ & $\xi^5$ & $\xi^4$ & $\cdot$ & $\cdot$\\
\hhline{|=|=|=|=|=|=|=|=|=|}
$q_1$ & $\cdot$ & $\boldsymbol{\xi}$ & $\cdot$ & $\xi$ & $\cdot$ & $\xi^2$ & $\xi^3$ & $\xi^5$\\
\hline
$q_2$ & $\cdot$ & $\boldsymbol{\xi^2}$ & $\xi^4$ & $\cdot$ & $\xi^6$ & $\xi^2$ & $\xi^3$ & $\cdot$\\
\hline
$q_3$ & $\boldsymbol{1}$ & $\boldsymbol{\xi^3}$ & $\xi^4$ & $\cdot$ & $\cdot$ & $\cdot$ & $\xi^3$ & $\xi^5$\\
\hline
\end{tabular}
\vspace{5pt}
\caption{(Example~\ref{ex:2}) The table of dual codewords (checks) used in the \emph{two-round} distributed scheme to repair two erased codeword symbols $f(0)$ and $f(1)$, which correspond to the first two columns. The top row lists the $8$ evaluation points. A dot ``$\cdot$'' means a zero entry. 
The basis $\xi U'=\{\xi^2,\xi^3,\xi^4\}$ of $F = \text{GF}(8)$ over $B = \text{GF}(2)$ corresponds to the evaluations of the three checks $p^*_1,p^*_2,p^*_3$ at $\alpha^* = 0$. Similarly, the basis $V'=\{\xi,\xi^2,\xi^3\}$ corresponds to the evaluations of the three checks $q_1,q_2,q_3$ at $\overline{\alpha} = 1$.
Within the first three rows, except the first two columns, the nonzero symbols in each column are all the same. 
This property reflects the fact that to compute the right-hand side sums in the repair equations generated by $p^*_i$'s, for each $\alpha \neq \{0,1\}$, the RN for $f(0)$ only needs to download one trace (bit) 
$\tr\Big(\frac{\fa}{\alpha} \Big)$ from the node storing $\fa$. Also, the first two rows have all zero entries at the second
column, which means the first two checks $p^*_1$ and $p^*_2$
involve $f(0)$ but not $f(1)$. 
As a consequence, in the Download Phase, the first two independent traces of $f(0)$, $\tr\big(p^*_1(0)f(0)\big)$ and $\tr\big(p^*_2(0)f(0)\big)$, can be determined from $6 = 8 - 2$ downloaded bits.
The third check $p^*_3$ involves \emph{both} $f(0)$ and $f(1)$, as described in the scheme. 
However, because of its definition, $p^*_3(1) = \xi = q_1(1)$. This allows the RN for $f(0)$ to compute the term $\tr\big(p^*_3(1)f(1)\big) = \tr\big(q_1(1)f(1)\big)$ in the repair equation generated by $p^*_3$, by downloading $\tr\big(q_1(1)f(1)\big)$ from the RN for $f(1)$ in the Collaboration Phase.
The RN for $f(0)$ can then extract the third trace
of $f(0)$, namely $\tr\big(p^*_3(0)f(0)\big)$, from the repair equation generated by $p^*_3(x)$. 
It now has three independent traces $\tr\big(p^*_i(0)f(0)\big)$, $i = 1,2,3$, to completely recover the erased symbol $f(0)$. It then sends the repair trace $\tr\Big(\frac{f(0)}{0-1}\Big)$, or equivalently, $\tr\big(q_3(0)f(0)\big)$ to the RN for $f(1)$.  
The RN for $f(1)$ already obtained two independent traces for $f(1)$, namely $\tr\big(q_i(1)f(1)\big)$, $i = 1,2,$ in the Download Phase, by downloading six bits from other available nodes. Once it receives the repair trace from the RN for $f(0)$ in the Collaboration Phase, it can figure out the third trace $\tr\big(q_3(1)f(1)\big)$ and complete the recovery of $f(1)$. The repair bandwidth per erased symbol is seven bits.}
\label{tab:checks}
\vspace{-10pt}
\end{table}
\end{example}

\subsection{Centralized Repair Schemes for Two Erasures}
\label{subsec:centralized}

In a centralized repair scheme, a repair center carries out the recovery of both erased symbols simultaneously at one place. Then it distributes the recovered symbols to the corresponding RNs.
The repair bandwidth is defined to be the total amount of information the repair center downloads from the surviving nodes divided by the number of erased symbols. We can easily obtain a centralized repair scheme from each of the previously developed distributed schemes by 
simply letting the repair center download all the repair traces that both 
the RNs are supposed to download from other available nodes in the Download Phase. As the Collaboration Phase is no longer needed, the amount of
information the repair center downloads from the $n-2$ available nodes is at most $n-2$ sub-symbols in $B$ per erased symbol. 

\begin{theorem} 
\label{thm:centralized}
The centralized repair schemes deduced from the two distributed repair schemes in Section~\ref{subsec:depth_one} and Section~\ref{subsec:depth_two} can recover any two erased symbols $\fas$ and $\fab$ at a repair bandwidth of $n-2$ sub-symbols in $B$ per erased symbol. The first scheme requires that $\cF$ divides $t$, while the second one works for every $t$. 
\end{theorem} 

\begin{remark}
%It can be verified that when $n-k=|B|^{t-1}$ and $t \geq \frac{2|B|}{|B|-1}$, it holds that $2(n-2) < kt$. In this case, the centralized schemes developed in this section have a strictly smaller repair bandwidth than the naive scheme.  
Whenever $2(n-2) < kt$, or equivalently, $\frac{k}{n-2} > \frac{2}{t}$, the centralized schemes developed in this section have a strictly smaller repair bandwidth than the naive scheme.
\end{remark}

\section{Repairing Reed-Solomon Codes With Three Erasures}
\label{sec:three_erasures}

We extend our centralized and distributed schemes developed in Section~\ref{sec:two_erasures} to address the case when \emph{three} codeword
symbols are erased in an RS code. We assume that
$n - k \leq |B|^{t-1}$ and that the field extension degree $t$ is divisible by the characteristic of the field. The unique feature of these repair schemes is the concept of a \emph{repair cycle}, in which the computation of a target trace enables the computation of another target trace, and so forth. The target traces found during the cycle are not known in advance, and are only determined once an \emph{activating} trace is found, based on the approach of Theorem~\ref{thm:three_erasures} and Theorem~\ref{thm:three_erasures_distributed}. 
When two such repair cycles are completed, $t$ independent traces for each of the three erased symbols are obtained. Consequently, 
these symbols can be recovered simultaneously, with a repair cost of $n-3$ downloaded sub-symbols per erased symbol in the centralized scheme, and with a repair cost of $n-1$ downloaded sub-symbols per erased symbol in the distributed scheme. 
%which is \emph{optimal} when $n = |B|^t$ and $n - k = |B|^{t-1}$. 
At such low bandwidths, our repair schemes can only handle certain patterns of three erasures, and such patterns can be precisely characterized. For other patterns, the repair cycles cannot be activated, and hence, the repair process itself cannot be initiated. Of course, one can always repair those patterns by downloading more traces, leading to larger bandwidths.  

\subsection{A Centralized Three-Erasure Repair Scheme for Reed-Solomon Codes}
\label{subsec:centralized_3erasures}

\begin{theorem} 
\label{thm:three_erasures}
Suppose $f(x) \in F[x]$ is a polynomial of degree at most $k-1$, which
corresponds to a codeword in a Reed-Solomon code $\C = \rsk$. 
Assume that $\fas$, $\fab$, and $\fap$
are the three erased codeword symbols. Given that the field extension 
degree $t$ is divisible by the characteristic of $F$, and that  
\[
\left\{
\frac{\overline{\alpha}-\alpha^*}{\overline{\alpha} - \alpha'},
\frac{\alpha'-\overline{\alpha}}{\alpha' - \alpha^*},
\frac{\alpha^*-\alpha'}{\alpha^* - \overline{\alpha}}
\right\} \cap K \neq \varnothing
\]
holds, where $K = \ker(\tr)$ is the kernel of the trace function, there exists a centralized repair scheme
that recovers the three erased symbols by downloading three sub-symbols from each surviving node. In other words, the repair bandwidth of the scheme is $n-3$ sub-symbols per erased symbol.  
\end{theorem}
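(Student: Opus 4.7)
The plan is to mirror the two-erasure Centralized Repair Scheme (Algorithm~2) position by position. For each erased position $X\in\{\alpha^*,\overline{\alpha},\alpha'\}$ I will build a basis $\{u^{(X)}_i\}_{i=1}^{t}$ of $F$ over $B$ and the corresponding family of checks $p_{u^{(X)}_i,X}(x)$ defined as in~(\ref{eq:p}); the $i$th repair equation evaluates at $X$ to $\tr(u^{(X)}_i\,f(X))$, so that recovering $f(X)$ reduces to reading off $t$ traces independent over $B$. The bandwidth budget $3(n-3)$ corresponds to letting every surviving node $\alpha$ send three helping traces, namely $\tr\bigl(f(\alpha)/(\alpha-\alpha^*)\bigr)$, $\tr\bigl(f(\alpha)/(\alpha-\overline{\alpha})\bigr)$ and $\tr\bigl(f(\alpha)/(\alpha-\alpha')\bigr)$; by Lemma~\ref{lem:trace} these suffice to compute the right-hand side of every repair equation in every family.

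First I will produce $t-2$ \emph{clean} traces per erased symbol. Labeling the other two erased positions as $Y,Z$, I take $u^{(X)}_1,\ldots,u^{(X)}_{t-2}$ to be a $B$-basis of the intersection $K_{X,Y}\cap K_{X,Z}$, a subspace of $B$-dimension at least $t-2$ because each factor is $(t-1)$-dimensional by Lemma~\ref{lem:Kab}~(b). By Lemma~\ref{lem:Qp}~(a) each such check vanishes simultaneously at $Y$ and $Z$, so its repair equation yields a clean trace of $f(X)$ -- $3(t-2)$ traces in total. Next I extend each family with two \emph{extra} vectors chosen as $u^{(X)}_{t-1}\in K_{X,Z}\setminus K_{X,Y}$ and $u^{(X)}_{t}\in K_{X,Y}\setminus K_{X,Z}$; the two associated extra checks involve $X$ together with exactly one of $Y,Z$, so each of the six resulting extra repair equations has the shape
\[
\tau + \tr\bigl(c\cdot f(W)\bigr) = (\text{computable RHS}),
\]
where $\tau$ is the unknown good trace of $f(X)$, $W\in\{Y,Z\}$ is the single interfering position, and $c\in F$ is the evaluation of the extra check at $W$.

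The main obstacle, and the heart of the proof, is to disentangle the six unknown good traces from the six interferences. Lemma~\ref{lem:Qp}~(b), which requires $t$ divisible by the characteristic, forces each interfering coefficient $c$ to lie in the $(t-1)$-dimensional root space $K_{X,W}=K_{W,X}$. Since this root space is spanned by the $t-2$ clean basis vectors at position $W$ together with exactly one of the two extra basis vectors there, every interfering trace collapses -- by linearity of $\tr$ -- to a sum of already-known clean traces plus a $B$-scalar multiple of a single other unknown good trace. Thus the six extra equations form a $6\times6$ linear system over $B$ whose dependency graph splits into two length-three cycles; invertibility of each cycle reduces to the nonvanishing in $B$ of $1+\prod_{i} c_i$, where each scalar $c_i$ has the form $\tr\bigl((Z-Y)/(Z-X)\bigr)$ times a nonzero element of $B$ (by direct computation of $p_{u,X}(W)=\tr\bigl(u(W-X)\bigr)/(W-X)$).

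The concluding step is to verify that the hypothesis -- membership of at least one of the three ratios $\frac{\overline{\alpha}-\alpha^*}{\overline{\alpha}-\alpha'}$, $\frac{\alpha'-\overline{\alpha}}{\alpha'-\alpha^*}$, $\frac{\alpha^*-\alpha'}{\alpha^*-\overline{\alpha}}$ in $K=\ker(\tr)$ -- makes one of these $B$-scalars vanish, which collapses a cycle into a chain solvable by back-substitution. This vanishing edge is precisely the ``activating trace'' referred to in the introduction of Section~\ref{sec:three_erasures}: once it is extracted, it unlocks the remaining traces around the cycle. I will then use the product identity $\prod\bigl(\text{three ratios}\bigr)=-1$ -- inherent in the cyclic structure of the three evaluation points -- together with the symmetric roles of the three positions to argue that the second cycle's determinant is forced to be nonzero as well, so its three good traces can likewise be extracted. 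With all six extra traces recovered, each family $\{u^{(X)}_i\}_{i=1}^{t}$ is a basis of $F$ over $B$, so every erased symbol is uniquely reconstructed from its $t$ independent traces at total cost $3(n-3)$ sub-symbols, as claimed.
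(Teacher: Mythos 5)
Your overall architecture matches the paper's: clean traces from a basis of the triple intersection $K_{\alpha^*,\overline{\alpha}}\cap K_{\alpha^*,\alpha'}$, then extra checks whose single interfering coefficient lies (via Lemma~\ref{lem:Qp}~(b)) in a pairwise root space, producing two length-three dependency cycles among the six unknown extra traces. However, there are two genuine gaps. First, your argument for the \emph{second} cycle does not work. The cycle determinant is $1+c_1c_2c_3$ with $c_i\in B$ equal (up to units of $B$) to \emph{traces} of the difference ratios; the identity $\prod(\text{ratios})=-1$ is an identity among elements of $F$, and it imposes no constraint on the product of their traces, so it cannot force $1+c_1c_2c_3\neq 0$ (over $\mathbb{F}_2$, for instance, $c_1=c_2=c_3=1$ gives a singular cycle). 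The theorem's hypothesis names only the three ratios controlling \emph{one} cycle, so as written your scheme may fail to recover the three traces of the other cycle. The missing ingredient, which is how the paper proceeds, is that divisibility of $t$ by the characteristic gives $\tr(1)=t\cdot 1=0$, i.e.\ $1\in K$; hence if $\kappa=\frac{\overline{\alpha}-\alpha^*}{\overline{\alpha}-\alpha'}\in K$ then also $\kappa-1=\frac{\alpha^*-\alpha'}{\alpha'-\overline{\alpha}}\in K$, and this second membership zeroes a coupling coefficient in the \emph{other} cycle. Thus a single hypothesis ratio simultaneously collapses both cycles into chains (the paper's Activation Lemma activates both repair cycles), and no determinant argument is needed or available.

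Second, your construction silently assumes $K_{X,Y}\neq K_{X,Z}$ for each erased position $X$, since you pick $u^{(X)}_{t-1}\in K_{X,Z}\setminus K_{X,Y}$ and $u^{(X)}_{t}\in K_{X,Y}\setminus K_{X,Z}$ and then claim $\{u^{(X)}_i\}_{i=1}^t$ is a basis of $F$. But $K_{X,Y}=\frac{1}{Y-X}K$ and $K_{X,Z}=\frac{1}{Z-X}K$ coincide exactly when $\frac{Y-X}{Z-X}\in B^*$, in which case the triple intersection has dimension $t-1$ and \emph{every} element you can draw from $K_{X,Y}\cup K_{X,Z}$ spans at most a $(t-1)$-dimensional space, so no choice of your $t$ vectors yields $t$ independent traces. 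This degenerate case is compatible with the theorem's hypothesis (indeed $B\subseteq K$ here, so a ratio lying in $B$ may itself be the activating ratio), so it cannot be excluded. The paper devotes its Round~III to precisely this situation: it adds one further check per erased position built from an element \emph{outside} both pairwise root spaces, which then involves all three erased symbols, and shows that its two interfering traces are linear combinations of traces already produced in Rounds~I and~II, so the final independent trace is obtained without extra downloads. Your proposal needs both repairs — the $\kappa\mapsto\kappa-1$ activation of the second cycle and the handling of coincident root spaces — before it establishes the stated $3(n-3)$ bandwidth for all triples satisfying the hypothesis.
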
 

We divide our repair process into two phases. In the \emph{Download Phase}, $s \in \{t-2,t-1\}$ independent target traces for each erased symbol are produced at the cost of downloading a total of $3(n-3)$ sub-symbols from $n-3$ surviving nodes. 
We emphasize that \emph{no} additional data download is needed afterward. In the \emph{Complement Phase}, one or two additional target traces are generated for each erased symbol, depending on whether $s = t-1$ or $s = t-2$. In the more complicated case when $s = t - 2$, the Complement Phase involves two \emph{repair cycles}, each of which produces three target traces. Within each cycle, the determination of one target trace leads to the determination of another target trace.\\ 

\nin\textbf{Download Phase.}
The repair center starts the whole repair process by producing $s \in \{t-2,t-1\}$ independent target traces for \emph{each} erased symbol. Lemma~\ref{lem:intersection} explains the traces construction process. 
Recall that for any three distinct elements $\alpha, \beta, \gamma$ in $F$, one can define the polynomials $\Qabz$, $\Qbgz$, and $\Qgaz$, according to~\eqref{eq:Q}.
Moreover, by \eqref{eq:Kab}, the intersection of the root spaces of \emph{any two} among these three polynomials is
\[
\Kabg \define \left\{z^* \in F \colon \tr(z^*\alpha) = \tr(z^*\beta) = \tr(z^*\gamma)\right\},
\]
which is precisely the intersection of the root spaces of \emph{all} three polynomials. 
We are interested in the case when $\{\alpha,\beta,\gamma\} = \{\alpha^*, \overline{\alpha}, \alpha'\}$.

\begin{lemma} 
\label{lem:intersection}
Let $\alpha$, $\beta$, and $\gamma$ be three distinct elements of $F \cong B^t$. Then
\[
t - 2 \leq \dim_B(\Kabg) \leq t-1.
\] 
\end{lemma}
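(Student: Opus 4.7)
The plan is to realize $\Kabg$ as the intersection of two $B$-hyperplanes inside $F$ and then apply the standard dimension formula for intersections of subspaces. Observe that the three defining conditions $\tr(z^*\alpha)=\tr(z^*\beta)$, $\tr(z^*\beta)=\tr(z^*\gamma)$, $\tr(z^*\alpha)=\tr(z^*\gamma)$ are pairwise redundant: any two of them imply the third. So I only need to keep two of them, say the first and the third, which by the definition \eqref{eq:Kab} means
\[
\Kabg \;=\; \Kab \,\cap\, \Kag .
\]

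For the upper bound, $\Kabg \subseteq \Kab$ immediately gives $\dim_B(\Kabg) \le \dim_B(\Kab) = t-1$ by Lemma~\ref{lem:Kab}(b). For the lower bound, I would invoke the dimension formula for subspaces of the $t$-dimensional $B$-vector space $F$:
\[
\dim_B(\Kab \cap \Kag) \;\ge\; \dim_B(\Kab) + \dim_B(\Kag) - \dim_B(F) \;=\; (t-1)+(t-1)-t \;=\; t-2 ,
\]
which again uses Lemma~\ref{lem:Kab}(b) applied to each of the two hyperplanes. Combining the two inequalities yields the claim.

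There is no real obstacle here; the only thing worth double-checking is the reduction from three trace equations to two, which is a triviality from the transitivity of equality, and the fact that the formula $\dim(U \cap V) \ge \dim U + \dim V - \dim W$ holds for any two subspaces of a finite-dimensional space $W$, which is standard linear algebra. As a side remark that the authors may or may not want to include, both extreme values in the inequality are attainable: $\dim_B(\Kabg) = t-1$ occurs exactly when $\Kab = \Kag$, i.e.\ when $\beta-\alpha$ and $\gamma-\alpha$ differ by a scalar in $B$ (equivalently, when the three points are $B$-collinear in $F$); otherwise the two hyperplanes are distinct and the intersection has dimension exactly $t-2$.
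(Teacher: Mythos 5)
Your proof is correct and takes essentially the same approach as the paper: both realize $\Kabg$ as the intersection of two of the $(t-1)$-dimensional trace hyperplanes (the paper works with $\Kab\cap\Kbg$, you with $\Kab\cap\Kag$) and obtain the upper bound from containment together with Lemma~\ref{lem:Kab}~(b). The only difference is cosmetic: for the lower bound the paper rewrites the intersection as $\eta K\cap K$ with $K=\ker(\tr)$ and applies rank--nullity to the map $\kappa\mapsto\tr(\eta\kappa)$, whereas you invoke the dimension formula $\dim_B(U\cap V)\ge\dim_B U+\dim_B V-t$ directly; these are equivalent standard facts, and your optional remark characterizing when each extreme value is attained is also correct, though not needed for the lemma.
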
 
\begin{proof}
We have $\Kabg = \Kab \cap \Kbg$.   
From Lemma~\ref{lem:Kab}~(b), $\Kab$ and $\Kbg$ both have dimension $t-1$. Hence,
\[
\dim_B(\Kabg) \leq t-1.
\] 
Moreover, we also have 
\[
K_{\alpha, \beta} = \frac{1}{\beta-\alpha}K,\quad
K_{\beta,\gamma} = \frac{1}{\gamma-\beta}K,
\]
where $K = \ker(\tr)$. 
We may hence write
\[
|\Kabg| = \left|\Big(\dfrac{1}{\beta-\alpha}K\Big) 
\cap \Big(\dfrac{1}{\gamma-\beta}K\Big) \right| 
= \left|\Big(\dfrac{\gamma-\beta}{\beta-\alpha}K\Big) \cap K \right|,
\]
and for $\eta \define \frac{\gamma-\beta}{\beta-\alpha}$, the previous expression reduces to 
\begin{equation}
\label{eq:I}
|\Kabg| = |\eta K \cap K|.
\end{equation} 
Consider the linear mapping $\sigma \colon K \to B$, where $\sigma(\kappa) = \tr(\eta\kappa)$. Then 
\begin{equation} 
\label{eq:sigma}
\ker(\sigma) = \eta K \cap K. 
\end{equation} 
Since $\dim_B(K) = t-1$, from \eqref{eq:I} and \eqref{eq:sigma} we deduce
\[
|\Kabg| = |\ker(\sigma)| \geq |B|^{\dim_B(K)-\dim_B(B)} = |B|^{t - 2}.
\]
Therefore, $\dim_B(\Kabg) \geq t - 2$. 
\end{proof}

According to Lemma~\ref{lem:intersection}, $s \define \dim_B(\Ksbp) \in \{t-2,t-1\}$. 
We now select three $B$-bases $U = \{u_1,\ldots,\us\}$, $V = \{v_1,\ldots,\vs\}$, and $W = \{w_1,\ldots,\ws\}$ of $\Ksbp$. Note that one can set $U$, $V$, and $W$ to the same basis. Here we keep the setting general and assume that they can be any three bases of $\Ksbp$.
Based on these three sets, we may define three types of checks, namely
\begin{itemize}
	\item $p_i(x) \define p_{u_i,\alpha^*}(x) = \tr\big(u_i(x-\alpha^*)\big)/(x-\alpha^*)$, \ $i \in [s]$.
	\item $q_i(x) \define p_{v_i,\overline{\alpha}}(x) = \tr\big(v_i(x-\overline{\alpha})\big)/(x-\overline{\alpha})$, \ $i \in [s]$.
	\item $r_i(x) \define p_{w_i,\alpha'}(x) = \tr\big(w_i(x-\alpha')\big)/(x-\alpha')$, \ $i \in [s]$.
\end{itemize}
From Lemma~\ref{lem:p}~(b), these checks satisfy the following properties.
\begin{itemize}
	\item $\{\poas,\ldots, \psas\} \equiv U$, since $\pias = u_i$,
	\item $\{\qoab,\ldots, \qsab\} \equiv V$, since $\qiab = v_i$,
	\item $\{\roap,\ldots, \rsap\} \equiv W$, since $\riap = w_i$,
\end{itemize}
where $\equiv$ stands for set equivalence. 
Moreover, by Lemma~\ref{lem:Qp}~(a), the following additional properties also hold. 
\begin{itemize}
	\item $\piab = 0$ and $\piap = 0$, for $i \in [s]$. 
	\item $\qias = 0$ and $\qiap = 0$, for $i \in [s]$. 
	\item $\rias = 0$ and $\riab = 0$, for $i \in [s]$. 
\end{itemize}
These $3s$ checks have support patterns as listed in Table~\ref{tab:three_erasures_Download}. We summarize these facts in Lemma~\ref{lem:roundI}.  

\begin{lemma}
\label{lem:roundI}
The following statements hold. 
\begin{itemize}
	\item $p_i(x)$ involves $\fas$, but excludes $\fab$ and $\fap$, $i \in [s]$.
	\item $q_i(x)$ involves $\fab$, but excludes $\fas$ and $\fap$, $i \in [s]$.
	\item $r_i(x)$ involves $\fap$, but excludes $\fas$ and $\fab$, $i \in [s]$.
\end{itemize}
\end{lemma}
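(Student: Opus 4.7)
The plan is to derive the three bullet points as direct corollaries of Lemma~\ref{lem:p}(b) and Lemma~\ref{lem:Qp}(a), exploiting the fact that by construction $u_i, v_i, w_i$ all lie in the common root space $\Ksbp$.

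First I would verify the claim for $p_i$. Since $U = \{u_1,\ldots,u_s\}$ is a basis of $\Ksbp$ over $B$, each $u_i$ is nonzero, and Lemma~\ref{lem:p}(b) gives $\pias = u_i \neq 0$, so $p_i$ involves $\fas$. For the exclusion of $\fab$, I would use that $\Ksbp \subseteq \Ksb$ by the very definition of $\Ksbp$ as the intersection of the relevant root spaces; in particular $u_i$ is a root of $\qsbz$, and applying Lemma~\ref{lem:Qp}(a) with $(\alpha,\beta,z^*) = (\alpha^*,\overline{\alpha},u_i)$ yields $p_i(\overline{\alpha}) = 0$. The identical argument with $\beta = \alpha'$, using the inclusion $\Ksbp \subseteq K_{\alpha^*,\alpha'}$, gives $p_i(\alpha') = 0$, so $p_i$ also excludes $\fap$.

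The remaining two bullet points follow by symmetric relabeling. For $q_i$ the role of the base evaluation point is played by $\overline{\alpha}$ instead of $\alpha^*$; here I would use $v_i \in \Ksbp \subseteq K_{\overline{\alpha},\alpha^*} \cap K_{\overline{\alpha},\alpha'}$, noting via Lemma~\ref{lem:Kab}(a) that $K_{\overline{\alpha},\alpha^*}$ coincides with $\Ksb$ so that membership in $\Ksbp$ is unambiguous. Lemma~\ref{lem:p}(b) again provides $\qiab = v_i \neq 0$, and Lemma~\ref{lem:Qp}(a) applied twice (with $(\overline{\alpha},\alpha^*,v_i)$ and $(\overline{\alpha},\alpha',v_i)$) gives $q_i(\alpha^*) = q_i(\alpha') = 0$. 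The argument for $r_i$ is completely analogous with $\alpha'$ and $w_i$.

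There is no genuine obstacle in this lemma: it merely repackages Lemma~\ref{lem:p}(b) and Lemma~\ref{lem:Qp}(a) in the form needed to fill in the first three rows of Table~\ref{tab:three_erasures}. One minor point worth highlighting is that only part~(a) of Lemma~\ref{lem:Qp} is invoked, so the divisibility hypothesis on $t$ is not strictly needed for this lemma in isolation, even though Round~I's subsequent steps will rely on part~(b) of that lemma and hence on the hypothesis.
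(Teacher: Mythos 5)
Your proof is correct and follows essentially the same route as the paper, which obtains the nonvanishing values $p_i(\alpha^*)=u_i$, $q_i(\overline{\alpha})=v_i$, $r_i(\alpha')=w_i$ from Lemma~\ref{lem:p}~(b) and the vanishing at the other two erased points from membership of $u_i,v_i,w_i$ in the intersection $K_{\alpha^*,\overline{\alpha},\alpha'}$ of the relevant root spaces. Your closing remark is apt: the vanishing indeed needs only Lemma~\ref{lem:Qp}~(a) (the paper's text nominally points to part~(b), but part~(a) is what applies), so no divisibility hypothesis on $t$ is required for this lemma in isolation.
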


%%%%%%%%%%%%%%%%%%%%%%%%%%%%%%%%%%%%%5
\begin{table}[htb]
\centering
\begin{tabular}{|l|c|c|c|l|}\hline
\diagbox[width=10em]{\textbf{Checks}}{\textbf{Erased}\\\textbf{Positions}} & $\alpha^*$ & $\overline{\alpha}$ 
& $\alpha'$ & \textbf{Purpose}\\ 
\hline
$p_1,\ldots,\ps$ & $\times$ & $\cdot$ & $\cdot$ & $\to$ target traces for $\fas$\\ \hline
$q_1,\ldots,\qs$ & $\cdot$ & $\times$ & $\cdot$ & $\to$ target traces for $\fab$\\ \hline
$r_1,\ldots,\rs$ & $\cdot$ & $\cdot$ & $\times$ & $\to$ target traces for $\fap$\\ \hline
\end{tabular}
\caption{The list of checks used in the Download Phase. 
A cross ``$\times$'' denotes a nonzero value, while a dot ``$\cdot$'' denotes a zero value. 
Each of these $3s$ checks involves exactly \emph{one} of the three codeword symbols $\fas$, $\fab$, and $\fap$.}
\label{tab:three_erasures_Download}
\vspace{-5pt}
\end{table}

From these $3s$ checks, the repair center generates $s$ repair equations for each erased symbol as follows. For $i \in [s]$:
\[
\begin{split} 
\tr\big(\pias\fas\big) &= - \sum_{\alpha \in A \setminus \{\alpha^*\}} \tr\big(\pia\fa\big),\\
\tr\big(\qiab\fab\big) &= - \sum_{\alpha \in A \setminus \{\overline{\alpha}\}} \tr\big(\qia\fa\big),\\ 
\tr\big(\riap\fap\big) &= - \sum_{\alpha \in A \setminus \{\alpha'\}} \tr\big(\ria\fa\big). 
\end{split}
\]
Due to the properties of the checks $p_i$, $q_i$, and $r_i$ stated in Lemma~\ref{lem:roundI}, the right-hand side sums of the above repair equations do not involve any of the symbols $\fas$, $\fab$, and $\fap$. Therefore, by Lemma~\ref{lem:trace},
one can determine $s$ independent traces for each erased symbol by downloading three sub-symbols (repair traces) from each surviving node, namely $\tr\Big(\frac{\fa}{\alpha - \alpha^*}\Big)$, $\tr\Big(\frac{\fa}{\alpha - \overline{\alpha}}\Big)$, and
$\tr\Big(\frac{\fa}{\alpha - \alpha'}\Big)$. In summary, after this phase,
the repair center obtains the following target traces for the erased symbols:
\begin{itemize}
	\item $s$ traces $\tr\big(u_i \fas\big)$, $i \in [s]$, for $\fas$,
	\item $s$ traces $\tr\big(v_i \fab\big)$, $i \in [s]$, for $\fab$,
	\item $s$ traces $\tr\big(w_i \fap\big)$, $i \in [s]$, for $\fap$. 
\end{itemize}
\vspace{5pt}

\nin\textbf{Complement Phase.}
By Lemma~\ref{lem:intersection}, $s \in \{t-2,t-1\}$. We consider the following two cases, depending on the value of $s$. 

\underline{Case 1}: $s = t - 2$. The repair center needs two more target traces to complete the recovery of each erased symbol. To that end, \emph{six} additional checks, $\ptmo, \pt$, $\qtmo, \qt$, $\rtmo$, and $\rt$, are constructed, based on the following field elements (see Fig.~\ref{fig:tower}):
\begin{itemize}	
	\item $\utmo \in F$, so that $U \cup \{\utmo\} = \{u_1,\ldots,\utmo\}$ generates the root space $\Ksb$ of $\qsbz$, 
\item $\ut \in F$, so that $U \cup \{\ut\} = \{u_1,\ldots,\utmt,\ut\}$ generates the root space $\Kps$ of $\qpsz$,
	\item $\vtmo \in F$, so that $V \cup \{\vtmo\} = \{v_1,\ldots,\vtmo\}$ generates the root space $\Kbp$ of $\qbpz$,
	\item $\vt \in F$, so that $V \cup \{\vt\} = \{v_1,\ldots,\vtmt,\vt\}$ generates the root space $\Ksb$ of $\qsbz$,
	\item $\wtmo \in F$, so that $W \cup \{\wtmo\} = \{w_1,\ldots,\wtmo\}$ generates the root space $\Kps$ of $\qpsz$,
	\item $\wt \in F$, so that $W \cup \{\wt\} = \{w_1,\ldots,\wtmt,\wt\}$ generates the root space $\Kbp$ of $\qbpz$. 
\end{itemize}

\begin{figure}[htb]
\centering
\includegraphics[scale=1]{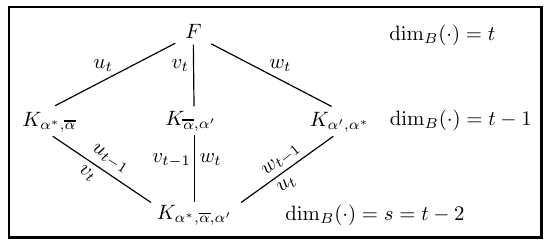}
\caption{The hierarchy of the $B$-subspaces of $F \cong B^t$.}
\label{fig:tower}
\vspace{-5pt}
\end{figure}

Note that by Lemma~\ref{lem:Kab}~(b), the root spaces of all polynomials $Q_{\cdot,\cdot}(z)$ listed above have dimension $t-1$. Since $s = t-2$, 
\begin{multline*}
\dim_B(\Ksb \cap \Kps) = \dim_B(\Ksbp) = t - 2\\
< t-1 = \dim_B(\Ksb) = \dim_B(\Kps),
\end{multline*}
we deduce that $\utmo \notin \Kps$ and $\ut \notin \Ksb$. Therefore, $U' = \{u_1,\ldots,u_t\}$ is a $B$-basis of $F$. Similarly, $V' = \{v_1,\ldots,v_t\}$ and $W' = \{w_1,\ldots,w_t\}$ are also $B$-bases of $F$. 
Thus, the six remaining checks can be defined as
\begin{itemize}
	\item $\ptmo(x) \define p_{\utmo,\alpha^*}(x) = \tr\big(\utmo(x-\alpha^*)\big)/(x-\alpha^*)$,
	\item $\pt(x) \define p_{\ut,\alpha^*}(x) = \tr\big(\ut(x-\alpha^*)\big)/(x-\alpha^*)$,
	\item $\qtmo(x) \define p_{\vtmo,\overline{\alpha}}(x) = \tr\big(\vtmo(x-\overline{\alpha})\big)/(x-\overline{\alpha})$,
	\item $\qt(x) \define p_{\vt,\overline{\alpha}}(x) = \tr\big(\vt(x-\overline{\alpha})\big)/(x-\overline{\alpha})$,
	\item $\rtmo(x) \define p_{\wtmo,\alpha'}(x) = \tr\big(\wtmo(x-\alpha')\big)/(x-\alpha')$,
	\item $\rt(x) \define p_{\wt,\alpha'}(x) = \tr\big(\wt(x-\alpha')\big)/(x-\alpha')$.
\end{itemize}
Lemma~\ref{lem:roundII} concludes that these six checks satisfy the support patterns described in Table~\ref{tab:three_erasures_Complement1}. 

%%%%%%%%%%%%%%%%%%%%%%%%%%%%%%%%%%%%%5
\begin{table}[htb]
\centering
\begin{tabular}{|l|c|c|c|l|}\hline
\diagbox[width=10em]{\textbf{Checks}}{\textbf{Erased}\\\textbf{Positions}} & $\alpha^*$ & $\overline{\alpha}$ 
& $\alpha'$ & \textbf{Purpose}\\ 
\hline
$\ptmo$ & $\times$ & $\cdot$ & $\times$ & $\to$ target trace for $\fas$ \\ \hline
$\pt$ & $\times$ & $\times$ & $\cdot$ & $\to$ target trace for $\fas$\\ \hline
$\qtmo$ & $\times$ & $\times$ & $\cdot$ & $\to$ target trace for $\fab$\\ \hline
$\qt$ & $\cdot$ & $\times$ & $\times$ & $\to$ target trace for $\fab$\\ \hline
$\rtmo$ & $\cdot$ & $\times$ & $\times$ & $\to$ target trace for $\fap$\\ \hline
$\rt$ & $\times$ & $\cdot$ & $\times$ & $\to$ target trace for $\fap$\\ 
\hline
\end{tabular}
\caption{(Case 1) The list of checks used in the Complement Phase when $s = t - 2$. 
A cross ``$\times$'' denotes a nonzero value, while a dot ``$\cdot$'' denotes a zero value.}
\label{tab:three_erasures_Complement1}
\vspace{-5pt}
\end{table}

\begin{lemma}
\label{lem:roundII}
The following statements hold. 
\begin{itemize}
	\item $\ptmo(x)$ involves $\fas$ and $\fap$, but excludes $\fab$.
	\item $\pt(x)$ involves $\fas$ and $\fab$, but excludes $\fap$.
	\item $\qtmo(x)$ involves $\fab$ and $\fas$, but excludes $\fap$.
	\item $\qt(x)$ involves $\fab$ and $\fap$, but excludes $\fas$.
	\item $\rtmo(x)$ involves $\fap$ and $\fab$, but excludes $\fas$.
	\item $\rt(x)$ involves $\fap$ and $\fas$, but excludes $\fab$.
\end{itemize}
\end{lemma}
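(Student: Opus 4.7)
The plan is to prove all six claims by a single template, applying Lemma~\ref{lem:p}(b) for the ``involves'' direction and Lemma~\ref{lem:Qp}(a) for the ``excludes'' direction, and then to verify this template once for each of the six checks. Lemma~\ref{lem:Kab}(b) ensures that the three root spaces used to select $\uspo,\uspt$, $\vspo,\vspt$, $\wspo,\wspt$ all have dimension $t-1$, so that each extending vector exists whenever $\Ksbp$ is strictly contained in the relevant root space.

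For the ``involves'' direction, Lemma~\ref{lem:p}(b) gives $p_{z,\alpha}(\alpha)=z$. Hence $\pspo(\alpha^*)=\uspo$, $\pspt(\alpha^*)=\uspt$, $\qspo(\overline{\alpha})=\vspo$, $\qspt(\overline{\alpha})=\vspt$, $\rspo(\alpha')=\wspo$, and $\rspt(\alpha')=\wspt$, and each of these is nonzero because it was chosen to extend a basis contained in the proper subspace $\Ksbp$ to a basis of a strictly larger root space. This yields the six asserted ``involves'' claims in one line each.

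For the ``excludes'' direction, Lemma~\ref{lem:Qp}(a) says $p_{z^*,\alpha}(\beta)=0$ whenever $z^*\in\Kab$. I would simply match each extending vector to its defining root space. For instance, $\uspo$ lies in the root space of $\qsbz$, i.e.\ in $\Ksb$, which immediately yields $\pspo(\overline{\alpha})=0$; and $\uspt$ lies in the root space of $\qpsz$, which by Lemma~\ref{lem:Kab}(a) equals $K_{\alpha^*,\alpha'}$, so $\pspt(\alpha')=0$. The remaining four cases are identical bookkeeping, using Lemma~\ref{lem:Kab}(a) once more to rewrite $K_{\overline{\alpha},\alpha^*}=\Ksb$ and $K_{\alpha',\overline{\alpha}}=K_{\overline{\alpha},\alpha'}$ so that the centers $\overline{\alpha},\alpha'$ appearing in the checks align with the ``$\alpha$'' argument of Lemma~\ref{lem:Qp}(a).

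There is no real obstacle: the lemma is a direct bookkeeping consequence of Lemmas~\ref{lem:p}(b),~\ref{lem:Kab}(a), and~\ref{lem:Qp}(a). The only point worth flagging is the implicit nondegeneracy assumption that each extending vector actually exists, which amounts to $\Ksbp$ being strictly contained in each of $\Ksb$, $\Kps$, and $\Kbp$; this is precisely the case $s=t-2$ of Lemma~\ref{lem:intersection}, and it is what makes the Round~II construction viable.
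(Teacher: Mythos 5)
Your proof is correct and takes essentially the same route as the paper, which simply notes that the lemma is a corollary of Lemma~\ref{lem:p}~(b) (giving $p_{z,\alpha}(\alpha)=z$, hence ``involves'') and Lemma~\ref{lem:Qp}~(a) (giving $p_{z^*,\alpha}(\beta)=0$ for $z^*\in\Kab$, hence ``excludes''), together with the symmetry $\Kab\equiv\Kba$ from Lemma~\ref{lem:Kab}~(a). Your closing remark is also consistent with the paper: when $s=t-1$ the root spaces coincide with $\Ksbp$, the extending elements add no new information (one just needs them nonzero for the ``involves'' claims), and the paper handles this degeneracy via Round~III rather than inside this lemma.
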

\begin{proof}
We prove the first statement, which states that $\ptmoas \neq 0$, $\ptmoap \neq 0$, while $\ptmoab = 0$. The other statements can be derived in a similar manner. Indeed, Lemma~\ref{lem:p}~(b) implies that $\ptmoas = \utmo \neq 0$ and Lemma~\ref{lem:Qp}~(a) implies that $\ptmoab = 0$ since $\utmo \in \Ksb$. 
As $\utmo \notin \Kps$, we deduce that $\ptmoap \neq 0$. 
\end{proof}

\begin{figure}[htb]
\centering
\includegraphics[scale=0.83]{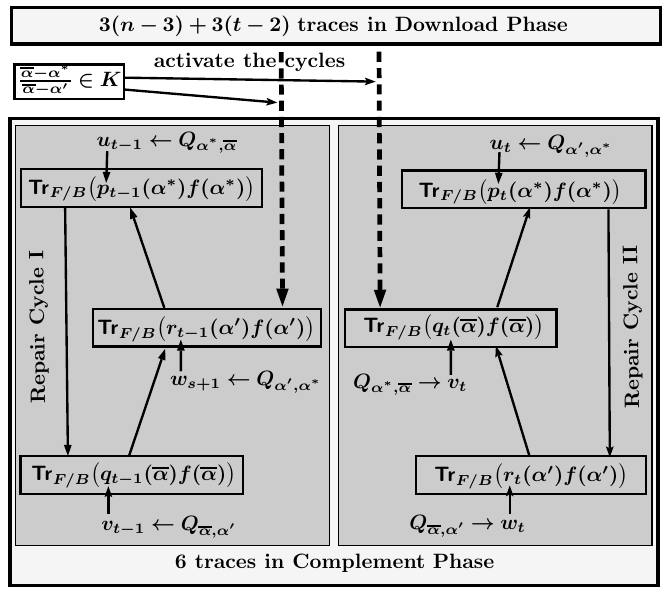}
\caption{Illustration of the centralized repair scheme for three erasures when $s = t-2$.
The long solid arrows within each cycle describe the (cyclic) order in which traces are generated. 
The two dashed arrows indicate that under an additional assumption, namely $\frac{\overline{\alpha}-\alpha^*}{\overline{\alpha} - \alpha'} \in K = \ker(\tr)$, two traces in the two cycles can be generated using the traces obtained during the Download Phase along with their corresponding repair equations.}
\label{fig:repair_cycle}
\vspace{-5pt}
\end{figure}

We outline next the cyclic procedure for generating the six target traces needed for the recovery of the three erased symbols, based on the newly introduced six checks. 
Note that the trace generation within each cycle 
works under the condition that $t$ is divisible by the
characteristic of the fields $B$ and $F$, but no other constraints are needed. However, in order to ``trigger'' the cycles, i.e., in order to generate one of the traces in each cycle, we need to assume further that one of the three ratios listed in Theorem~\ref{thm:three_erasures} must belong to the kernel of the trace function.
The whole repair process is illustrated in Fig.~\ref{fig:repair_cycle}.  

In Lemma~\ref{lem:cycle1} and Lemma~\ref{lem:cycle2} we establish the order in which traces are generated within each cycle.  

\begin{lemma}[Cycle Lemma I] 
\label{lem:cycle1}
Suppose that all $3(n-3)$ repair traces downloaded and $3(t-2)$ target traces constructed in the Download Phase are given. 
Moreover, suppose that $t$ is divisible by $\cF$. Then the following statements hold. 
\begin{itemize}
	\item If the target trace $\tr\big( \rtmoap\fap \big)$ is known, one can determine the target trace $\tr \big( \ptmoas\fas \big)$. 
	\item If the target trace $\tr \big( \ptmoas\fas \big)$ is known, one can determine the target trace $\tr \big( \qtmoab\fab \big)$.
	\item If the target trace $\tr \big( \qtmoab\fab \big)$ is known, one can determine the target trace $\tr\big( \rtmoap\fap \big)$.
\end{itemize} 
\end{lemma}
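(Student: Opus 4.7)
The three implications share the same template, so I would prove the first in detail and then explain how the other two follow by cyclic symmetry. For each of the checks $\pspo$, $\qspo$, or $\rspo$, the corresponding repair equation has a vanishing coefficient at one of the three erased positions (by Lemma~\ref{lem:roundII}), and therefore reduces to the target ``good'' trace plus one ``interfering'' trace at the third erased position. The key structural observation I would exploit is that this interfering trace turns out to be a $B$-linear combination of Round~I traces together with the one hypothesized trace, whence the good trace is determined.

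For the first implication I would consider the repair equation generated by $\pspo$, which after dropping the vanishing term $\tr\big(\pspoab\fab\big)$ (since $\pspoab = 0$) reads
\[
\tr\big(\pspoas\fas\big) + \tr\big(\pspoap\fap\big) = -\sum_{\alpha \in A \setminus \{\alpha^*, \overline{\alpha}, \alpha'\}} \tr\big(\pspoa\fa\big).
\]
The right-hand sum is computable from the Round~I helper sub-symbols by Lemma~\ref{lem:trace}, so it suffices to determine the interfering trace $\tr\big(\pspoap\fap\big)$. Here the divisibility hypothesis enters: Lemma~\ref{lem:Qp}(b) gives $\pspoap \in K_{\alpha^*,\alpha'}$, a space which by the construction in Round~II has basis $\{w_1,\ldots,\wspo\}$. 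I would expand $\pspoap = \sum_{i=1}^{s+1} c_i w_i$ with explicit coefficients $c_i \in B$, and then use linearity of trace together with $\rspoap = \wspo$ from Lemma~\ref{lem:p}(b) to write
\[
\tr\big(\pspoap\fap\big) = \sum_{i=1}^{s} c_i \tr\big(w_i \fap\big) + c_{s+1}\tr\big(\rspoap\fap\big),
\]
whose first $s$ terms are Round~I traces and whose last term is the hypothesized trace.

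The second implication is the same template applied to the repair equation of $\qspo$: the $\fap$ contribution vanishes ($\qspoap = 0$ by Lemma~\ref{lem:roundII}), and the interfering $\fas$-coefficient $\qspoas$ lies in $\Ksb$ by Lemma~\ref{lem:Qp}(b), a space with basis $\{u_1,\ldots,\uspo\}$; since $\pspoas = \uspo$ by Lemma~\ref{lem:p}(b), the interfering trace decomposes into the Round~I $\fas$-traces and the first hypothesized trace. The third implication is analogous for $\rspo$: the $\fas$ contribution vanishes, $\rspoab$ lies in $\Kbp$ with basis $\{v_1,\ldots,\vspo\}$, and $\qspoab = \vspo$ ties the expansion to the Round~I $\fab$-traces together with the second hypothesized trace.

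The conceptual core---and the only real obstacle---is the simultaneous alignment at each cycle step of three facts: (i) the correct check-coefficient vanishes to kill the ``unwanted'' erased symbol; (ii) the interfering coefficient lands inside the correct $(t-1)$-dimensional root space; and (iii) the distinguished ``new'' basis vector of that root space is exactly the quantity produced at the same position by the preceding cycle step, via Lemma~\ref{lem:p}(b). Both (i) and (ii) rely on Lemma~\ref{lem:Qp}, whose part~(b) requires $t$ divisible by the characteristic; without that hypothesis the interfering coefficient may escape the root space spanned by the known Round~I basis plus one newly introduced vector, and the cycle cannot close.
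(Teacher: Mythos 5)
Your proposal is correct and matches the paper's argument: drop the vanishing coefficient at the third erased position (Lemma~\ref{lem:roundII}), note the right-hand side is computable from Round~I downloads, and then use Lemma~\ref{lem:Qp}(b) to place the interfering coefficient in the root space spanned by $\{w_1,\ldots,\wspo\}$ (respectively $\{u_1,\ldots,\uspo\}$, $\{v_1,\ldots,\vspo\}$), whose trace values are exactly the Round~I traces plus the hypothesized one via Lemma~\ref{lem:p}(b); the paper proves only the first implication and invokes symmetry, while you spell out all three. The only nitpick is that when $s=t-1$ the set $\{w_1,\ldots,\wspo\}$ is a spanning set rather than a basis, but this does not affect the expansion argument.
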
 
\begin{proof} 
Due to symmetry, we only need to prove the first statement. 
Suppose that the target trace $\tr\big( \rtmoap\fap \big)$ of $\fap$ is known.
%We show next how the ``interfering'' trace may be canceled from the repair equation generated by $\ptmo(x)$. 
According to Lemma~\ref{lem:roundII}, $\ptmoab = 0$. Therefore, 
\begin{multline}
\label{eq:ptmo}
\tr \big( \ptmoas\fas \big) + \tr\big( \ptmoap\fap \big)\\
= -\sum_{\alpha \in A \setminus \{\alpha^*, \overline{\alpha}, \alpha'\}}
\tr\big(\ptmoa\fa \big),
\end{multline} 
where the right-hand side sum can be computed from the repair traces obtained in the Download Phase. 
Our purpose is to extract the target trace $\tr\big(\ptmoas\fas \big)$
from this equation, and it suffices to show that we can determine the 
``interfering'' trace $\tr\big( \ptmoap \fap\big)$. 
Recall that $\rtmoap = \wtmo$, and that $\{w_1,\ldots,\wtmo\}$ generates the root space of $\qpsz$. 
Since we already generated $\tr\big( w_i\fap \big)$, for $i \in [t-1]$, by the linearity of the trace function, 
we only need to establish that $\ptmoap = p_{\utmo,\alpha^*}(\alpha')$ is a root of the polynomial $\qpsz$. 
But this claim follows directly from Lemma~\ref{lem:Qp}~(b).  
\end{proof} 

\begin{lemma}[Cycle Lemma II] 
\label{lem:cycle2}
Suppose that all $3(n-3)$ repair traces downloaded and $3(t-2)$ target traces constructed in the Download Phase are given.
Moreover, suppose that $t$ is divisible by $\cF$. Then the following statements hold. 
\begin{itemize}
	\item If the target trace $\tr\big( \qtab\fab \big)$ is known, one can determine the target trace $\tr \big( \ptas\fas \big)$. 
	\item If the target trace $\tr \big( \ptas\fas \big)$ is known, one can determine the target trace $\tr \big( \rtap\fap \big)$.
	\item If the target trace $\tr \big( \rtap\fap \big)$ is known, one can determine the target trace $\tr\big( \qtab\fab \big)$.
\end{itemize} 
\end{lemma}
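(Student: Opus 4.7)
The plan is to mirror the proof of Cycle Lemma~I (Lemma~\ref{lem:cycle1}), exploiting the cyclic symmetry among the three statements. By inspection of the construction of $\pspt, \qspt, \rspt$ in Round~II and the support table in Lemma~\ref{lem:roundII}, the three statements are related by the cyclic permutation $(\alpha^*, \overline{\alpha}, \alpha') \mapsto (\overline{\alpha}, \alpha', \alpha^*)$ applied to the roles of $(\ps, \qs, \rs)$-type checks (with the appropriate relabeling of bases). Hence it suffices to prove one of them in detail; say the first.

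Assume $\tr\big(\qsptab\fab\big)$ is given, along with all $3s$ Round~I traces. Consider the repair equation generated by the check $\pspt(x)$:
\[
\tr\big(\psptas\fas\big) + \tr\big(\psptab\fab\big) + \tr\big(\psptap\fap\big)
= -\sum_{\alpha \in A \setminus \{\alpha^*,\overline{\alpha},\alpha'\}} \tr\big(\pspta\fa\big).
\]
By Lemma~\ref{lem:roundII}, $\psptap = 0$, so the $\fap$-term drops out. The right-hand side is computable from the sub-symbols $\tr\big(\fa/(\alpha-\alpha^*)\big)$ already downloaded in Round~I (via Lemma~\ref{lem:trace}). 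So the goal reduces to canceling the single ``interfering'' trace $\tr\big(\psptab\fab\big)$.

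Here is the key step, and the only point where we use the divisibility hypothesis. Since $\pspt(x) = p_{\uspt,\alpha^*}(x)$, Lemma~\ref{lem:Qp}(b) (which requires $t$ divisible by the characteristic) shows that $\psptab$ is a root of $\qsbz$. But by construction of $\vspt$ in Round~II, the set $V \cup \{\vspt\} = \{v_1,\ldots,\vs,\vspt\}$ is a $B$-basis of the root space of $\qsbz$. Hence $\psptab$ is a $B$-linear combination of $v_1,\ldots,\vs,\vspt$, and by linearity of the trace,
\[
\tr\big(\psptab\fab\big) \text{ is a }B\text{-linear combination of } \tr(v_1\fab),\ldots,\tr(\vs\fab),\tr(\vspt\fab).
\]
The first $s$ traces are produced in Round~I, and $\tr(\vspt\fab) = \tr\big(\qsptab\fab\big)$ by Lemma~\ref{lem:p}(b), which is the assumed trace. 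This recovers the interfering term, allowing us to extract $\tr\big(\psptas\fas\big)$.

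The second statement is proved analogously: the repair equation generated by $\rspt$ has $\rsptab = 0$ (Lemma~\ref{lem:roundII}), and the interfering trace $\tr\big(\rsptas\fas\big)$ is handled by noting that $\rsptas = p_{\wspt,\alpha'}(\alpha^*)$ is a root of $\qpsz$ (Lemma~\ref{lem:Qp}(b)), hence a $B$-linear combination of $u_1,\ldots,\us,\uspt$, with $\tr(\uspt\fas) = \tr(\psptas\fas)$ being the newly acquired trace. The third statement is treated in the same manner using $\qspt$, the check excluding $\fas$, with the interfering trace $\tr\big(\qsptap\fap\big)$ expanded via $\qsptap \in \mathrm{span}_B\{w_1,\ldots,\ws,\wspt\}$ (root space of $\qbpz$). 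The main (and essentially only) obstacle is the identification of the interfering evaluation point with a root of the appropriate $Q$-polynomial, which is precisely the content of Lemma~\ref{lem:Qp}(b); everything else is bookkeeping and linearity of the trace.
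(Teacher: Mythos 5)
Your proposal is correct and follows essentially the same route as the paper: the paper proves Cycle Lemma II by noting it proceeds exactly as the proof of Cycle Lemma I, and your argument is precisely that adaptation, using the exclusions from Lemma~\ref{lem:roundII}, Lemma~\ref{lem:Qp}(b) to place the interfering evaluation in the right root space, and the spanning sets $\{v_1,\ldots,v_s,\vspt\}$, $\{u_1,\ldots,u_s,\uspt\}$, $\{w_1,\ldots,w_s,\wspt\}$ together with Lemma~\ref{lem:p}(b) and linearity of the trace. One trivial nitpick: when $s=t-1$ the set $V\cup\{\vspt\}$ is a generating set of the root space of $\qsbz$ but not a basis (it has $t$ elements in a $(t-1)$-dimensional space); spanning is all your argument needs, so nothing breaks.
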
 
\begin{proof}
The proof of this lemma proceeds along the same lines as the proof of Lemma~\ref{lem:cycle1}.  
\end{proof} 
%The two ``cycle'' lemmas above show that the traces generation flows work as claimed in the two repair cycles. 

To establish the validity of the procedure used in the Complement Phase, it remains to show that one can simultaneously ``activate'' 
the two cycles whenever one of the ratios of pairwise differences among $\alpha^*$, $\overline{\alpha}$, and $\alpha'$ belongs to the kernel $K$ of the trace function. 

\begin{lemma}[Activation Lemma]
\label{lem:activation}
Suppose that all $3(n-3)$ repair traces downloaded and $3(t-2)$ target traces constructed in the Download Phase are given. 
Moreover, suppose that $t$ is divisible by $\cF$. Then the following statements hold. 
\begin{itemize}
	\item If $\frac{\overline{\alpha}-\alpha^*}{\overline{\alpha} - \alpha'} \in K$, then the target traces $\tr\big( \rtmoap\fap \big)$ and $\tr\big( \qtab\fab \big)$ can be computed from the traces determined in the Download Phase.
	\item If $\frac{\alpha'-\overline{\alpha}}{\alpha' - \alpha^*} \in K$, then the target traces $\tr\big( \ptmoas\fas \big)$ and $\tr\big(\rtap\fap \big)$ can be computed from the traces determined in the Download Phase.
	\item If $\frac{\alpha^*-\alpha'}{\alpha^*-\overline{\alpha}} \in K$, then the target traces $\tr\big( \qtmoab\fab \big)$ and $\tr\big(\ptas\fas \big)$ can be computed from the traces determined in the Download Phase.
\end{itemize}
\end{lemma}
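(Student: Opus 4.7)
By the three-fold symmetry of the problem under cyclic relabeling of $(\alpha^*, \overline{\alpha}, \alpha')$—which rotates the bases $U, V, W$ and the check families $\{p_i\}, \{q_i\}, \{r_i\}$ accordingly—it suffices to prove the first bullet. So assume $\frac{\overline{\alpha}-\alpha^*}{\overline{\alpha}-\alpha'} \in K$; the goal is to compute $\tr\big(\rspoap\fap\big)$ and $\tr\big(\qsptab\fab\big)$ from the Round~I data alone.

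The plan is to isolate each target inside a Round~II repair equation and then cancel out its interfering trace using Round~I information. Lemma~\ref{lem:roundII} gives $\rspoas = 0 = \qsptas$, so the repair equations for $\rspo$ and $\qspt$ collapse to
\[
\tr\big(\rspoap\fap\big) + \tr\big(\rspoab\fab\big) \;=\; -\!\!\!\sum_{\alpha \in A \setminus \{\alpha^*,\overline{\alpha},\alpha'\}}\!\!\tr\big(\rspoa\fa\big),
\]
\[
\tr\big(\qsptab\fab\big) + \tr\big(\qsptap\fap\big) \;=\; -\!\!\!\sum_{\alpha \in A \setminus \{\alpha^*,\overline{\alpha},\alpha'\}}\!\!\tr\big(\qspta\fa\big),
\]
and by Lemma~\ref{lem:trace} each right-hand side is a $B$-linear combination of the Round~I helping traces $\tr\big(\fa/(\alpha-\alpha')\big)$ and $\tr\big(\fa/(\alpha-\overline{\alpha})\big)$. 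The task therefore reduces to evaluating the two interfering traces $\tr\big(\rspoab\fab\big)$ and $\tr\big(\qsptap\fap\big)$.

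The heart of the argument is to show that, under the hypothesis, both $\rspoab$ and $\qsptap$ lie in $\Ksbp$. Writing $\rspoab = b/(\overline{\alpha}-\alpha')$ with $b = \tr\big(\wspo(\overline{\alpha}-\alpha')\big) \in B$, Lemma~\ref{lem:Qp}(b) (valid since $t$ is divisible by the characteristic) already forces $\rspoab \in \Kbp$. Membership in $\Kps$ reduces to $\tr\big(\tfrac{\alpha^*-\alpha'}{\overline{\alpha}-\alpha'}\big)=0$, and the algebraic decomposition
\[
\tfrac{\alpha^*-\alpha'}{\overline{\alpha}-\alpha'} \;=\; 1 - \tfrac{\overline{\alpha}-\alpha^*}{\overline{\alpha}-\alpha'}
\]
together with $\tr(1)=0$ and the hypothesis does the job. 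An analogous identity $\frac{\alpha^*-\alpha'}{\alpha'-\overline{\alpha}} = \frac{\overline{\alpha}-\alpha^*}{\overline{\alpha}-\alpha'}-1$ places $\qsptap$ in $\Ksbp$.

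Once each interfering value lies in $\Ksbp$, expand $\rspoab = \sum_{j=1}^s e_j v_j$ and $\qsptap = \sum_{j=1}^s f_j w_j$ with $e_j, f_j \in B$. By linearity of $\tr$, the interfering traces become $\tr\big(\rspoab\fab\big) = \sum_j e_j \tr\big(v_j \fab\big)$ and $\tr\big(\qsptap\fap\big) = \sum_j f_j \tr\big(w_j\fap\big)$, which are $B$-linear combinations of the Round~I traces of $\fab$ and $\fap$ respectively. Substituting these back into the two reduced equations solves for the targets. The main obstacle is the identity step: the hypothesis controls one specific ratio, yet the construction demands that a \emph{different} ratio have zero trace. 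The resolution is to rewrite the latter ratio as $1$ plus or minus the hypothesized one, so that $\tr(1)=0$ (where divisibility of $t$ by the characteristic is indispensable) together with the hypothesis completes the cancellation.
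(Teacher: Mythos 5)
Your proof is correct and follows essentially the same route as the paper's: reduce to the first bullet by symmetry, use Lemma~\ref{lem:roundII} to get repair equations for $\rspo$ and $\qspt$ with a single interfering trace each, show the interfering evaluations lie in $\Ksbp$ via Lemma~\ref{lem:Qp}(b) plus a trace computation exploiting the hypothesized ratio, and expand in the Round~I bases $V$ and $W$. The only cosmetic difference is that for $\rspoab$ you verify root-ness of $\qpsz$ through the identity $\frac{\alpha^*-\alpha'}{\overline{\alpha}-\alpha'}=1-\frac{\overline{\alpha}-\alpha^*}{\overline{\alpha}-\alpha'}$ (using $\tr(1)=0$), whereas the paper checks root-ness of $\qsbz$ directly via $\overline{\alpha}-\alpha^*=\kappa(\overline{\alpha}-\alpha')$; both yield membership in $\Ksbp$ since the pairwise root-space intersections coincide.
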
 
\begin{proof} 
Due to symmetry, it suffices to prove the first statement only. 
Suppose that $\frac{\overline{\alpha}-\alpha^*}{\overline{\alpha} - \alpha'} \in K$. We first show that the target trace
$\tr\big( \rtmoap\fap \big)$ can be determined based on the $t-2$ target traces
of $\fab$ and the $n-3$ repair traces for $\fap$ obtained in the Download Phase. Since $\rtmoas = 0$ due to Lemma~\ref{lem:roundII}, we can write the repair equation generated by $\rtmo(x)$ as follows:
\begin{multline}
\tr\big(\rtmoap\fap \big) + \tr\big( \rtmoab\fab \big)\\
= -\sum_{\alpha \in A \setminus \{\alpha^*, \overline{\alpha}, \alpha'\}}
\tr\big(\rtmoa\fa \big).
\end{multline} 
The right-hand side sum can be determined based on the repair traces for $\fap$
downloaded in the Download Phase. In order to extract the target trace $\tr\big( \rtmoap\fap \big)$, we need to cancel out the ``interfering'' trace
$\tr\big( \rtmoab\fab \big)$ from this repair equation. To do this, we show that
the ``interfering'' trace can be written as a $B$-linear combination of the
$t-2$ target traces of $\fab$, namely
\[
\tr\big( v_1\fab\big),\ldots,\tr\big(v_{t-2}\fab \big),
\]
which are obtained in the Download Phase. Note that 
$V = \{v_1,\ldots,v_{t-2}\}$ is a basis of $\Ksbp$, which is precisely the intersection of the root spaces of $\qbpz$ and $\qsbz$. Therefore, it suffices to show that $\rtmoab$ is a root of both $\qbpz$ and $\qsbz$. By Lemma~\ref{lem:Qp}~(b), $\rtmoab = p_{\wtmo,\alpha'}(\overline{\alpha})$ is a root of $\qbpz$. Hence, it remains to prove that $\rtmoab$ is a root of $\qsbz$. 

We now invoke the assumption that
$\frac{\overline{\alpha}-\alpha^*}{\overline{\alpha} - \alpha'} \in K$. This assumption implies that 
$\overline{\alpha}-\alpha^* = \kappa(\overline{\alpha} - \alpha')$, for some $\kappa \in K$. Let $\Delta = \overline{\alpha} - \alpha'$ and $b = \tr\big(\wtmo(\overline{\alpha} - \alpha') \big) \in B$. Then
\[
\rtmoab = \tr\big(\wtmo(\overline{\alpha} - \alpha')\big)/(\overline{\alpha} - \alpha') = b / \Delta,
\]
and since $\overline{\alpha}-\alpha^* = \kappa(\overline{\alpha} - \alpha') = \kappa \Delta$, we obtain
\[
\qsbz = \tr\big(z (\overline{\alpha}-\alpha^*) \big)
= \tr(z\kappa \Delta ).
\]
Therefore,
\[
\qsb\big( \rtmoab \big) = \tr\big((b/\Delta)\kappa \Delta \big)
= b\tr(\kappa)
= 0,
\]
because $b \in B$ and $\kappa \in K = \ker(\tr)$. Thus, $\rtmoab$ is a 
root of $\qsbz$ as desired. 

Using a similar approach, we show that the target trace $\tr\big(\qtab\fab \big)$ can be determined as well. We use the repair equation corresponding to $\qt(x)$,
keeping in mind that due to Lemma~\ref{lem:roundII}, $\qtas = 0$:  
\[
%\begin{split}
\tr\big(\qtab\fab \big) + \tr\big( \qtap\fap \big)
= \sum_{\alpha \in A \setminus \{\alpha^*, \overline{\alpha}, \alpha'\}}
\tr\big(\qta\fa \big).
%\end{split}
\] 
Again, the idea is to show that the ``interfering'' term
$\tr\big(\qtap\fap \big)$ can be written as a $B$-linear combination of the $t-2$ target traces of $\fap$, namely 
\[
\tr\big( w_1\fap\big),\ldots,\tr\big(w_{t-2}\fap \big),
\]
which were already generated in the Download Phase. Note that $W = \{w_1,\ldots,w_{t-2}\}$, and by its definition, represents a basis of $\Ksbp$, which is the intersection
of the root spaces of $\qbpz$ and $\qpsz$. Therefore, it suffices to
show that $\qtap$ is a root of both polynomials. The fact that
$\qtap = p_{\vt,\overline{\alpha}}(\alpha')$ is a root of $\qbpz$ follows from Lemma~\ref{lem:Qp}~(b). 
To show that $\qtap$ is also a root of $\qpsz$, note that
\[
\dfrac{\alpha^* - \alpha'}{\alpha' - \overline{\alpha}}
= \dfrac{\overline{\alpha}-\alpha^*}{\overline{\alpha} - \alpha'} - 1 = 
(\kappa - 1) = \kappa' \in K,
\]   
since $1 \in K$ (from the assumption that $t$ is divisible by $\cF$). 
Therefore, if we set $b' = \tr\big(\vt(\alpha' - \overline{\alpha}) \big) \in B$, then 
\[
\qps\big(\qtap\big) = \tr\Big(\frac{b'}{\alpha'-\overline{\alpha}}(\alpha^* - \alpha') \Big)\\ 
= b'\tr(\kappa')
= 0.
\]
The proof follows. 
\end{proof} 

Lemma~\ref{lem:cycle1}, Lemma~\ref{lem:cycle2}, and Lemma~\ref{lem:activation} complete the analysis of the case $s=t-2$ of the Complement Phase. 
At the end of the Complement Phase, the repair center has obtained $t$ independent traces for each erased symbol and therefore, is capable of recovering all three 
lost symbols. It remains to consider the case when $s = t - 1$. 

\underline{Case 2}: $s = t-1$. In this case, the roots spaces of the three polynomial
$\qsbz$, $\qbpz$, and $\qpsz$ coincide. This happens, for example, when $\frac{\overline{\alpha} - \alpha^*}{\alpha^*-\alpha'} \in B$. In such a case we get 
\[
\ker(\qsb) = \frac{1}{\overline{\alpha} - \alpha^*}K \equiv \frac{1}{\alpha^*-\alpha'}K = \ker(\qps).
\]
In this case, for each erased symbol, as the repair center already obtained $s = t - 1$ target traces in the Download Phase, one more target trace is needed for the recovery of the symbol.

We construct the last three checks $\pt(x)$, $\qt(x)$, and $\rt(x)$ via the field elements $\ut$, $\vt$, and $\wt$, chosen as follows:
\begin{itemize}
	\item Choose $\ut$ so that $\rank\big(\{u_1,\ldots,\ut\}\big) = t$.
	\item Choose $\vt$ so that $\rank\big(\{v_1,\ldots,\vt\}\big) = t$.
	\item Choose $\wt$ so that $\rank\big(\{w_1,\ldots,\wt\}\big) = t$.
\end{itemize}
We set 
\[
\begin{split}
\pt(x) &\define p_{\ut,\alpha^*}(x) = \tr\big( \ut(x-\alpha^*) \big)/(x - \alpha^*),\\
\qt(x) &\define p_{\vt,\overline{\alpha}}(x) = \tr\big( \vt(x-\overline{\alpha}) \big)/(x - \overline{\alpha}),\\
\rt(x) &\define p_{\wt,\alpha'}(x) = \tr\big( \wt(x-\alpha') \big)/(x - \alpha').
\end{split}
\]
Table~\ref{tab:three_erasures_Complement2} summarizes the different types of checks and their uses in the repair process when $s = t - 1$. 
%%%%%%%%%%%%%%%%%%%%%%%%%%%%%%%%%%%%%5
\begin{table}[htb]
\centering
\begin{tabular}{|l|c|c|c|l|}\hline
\diagbox[width=10em]{\textbf{Checks}}{\textbf{Erased}\\\textbf{Positions}} & $\alpha^*$ & $\overline{\alpha}$ 
& $\alpha'$ & \textbf{Purpose}\\ 
\hline
$\pt$ & $\times$ & $\times$ & $\times$ & $\to$ target trace for $\fas$\\ \hline
$\qt$ & $\times$ & $\times$ & $\times$ & $\to$ target trace for $\fab$\\ \hline
$\rt$ & $\times$ & $\times$ & $\times$ & $\to$ target trace for $\fap$\\ 
\hline
\end{tabular}
\caption{(Case 2) The list of checks used in the Complement Phase when $s = t - 1$. 
A cross ``$\times$'' denotes a nonzero value, while a dot ``$\cdot$'' denotes a zero value.}
\label{tab:three_erasures_Complement2}
%\vspace{-10pt}
\end{table}

Each of these three checks involve all three erased symbols. Hence, in each repair equation, there are two ``interfering'' terms. However, based on 
Lemma~\ref{lem:last_traces}, we can determine all of these terms from the repair and target traces obtained in the Download Phase. Consequently, the traces $\tr\big( \ut\fas \big)$, $\tr\big( \vt\fab \big)$,
and $\tr\big( \wt\fap \big)$, can also be computed. 
Note that no additional symbol download is required. 

\begin{lemma} 
\label{lem:last_traces}
Given $3t-3$ target traces of $\fas$, $\fab$, and $\fap$, and $3(n-3)$ repair traces obtained in the Download Phase, the target traces 
$\tr\big( \ut\fas \big)$, $\tr\big( \vt\fab \big)$,
and $\tr\big( \wt\fap \big)$ can be determined as long as $t$ is divisible by $\cF$.
\end{lemma} 
\begin{proof} 
By symmetry, it suffices to show that the target trace $\tr\big( \ut\fas \big)$ can be determined using the known traces. The repair equation generated by $\pt(x)$
is given as follows. 
\begin{multline}
\label{eq:pt_case2}
\tr\big( \ptas\fas \big) + \tr\big( \ptab\fab \big) + \tr\big( \ptap\fap \big)\\
= -\sum_{\alpha \in A \setminus \{\alpha^*, \overline{\alpha}, \alpha'\}}
\tr\big( \pta \fa \big).
\end{multline}
Note that the right-hand side sum of this repair equation can be determined using the $n-3$ repair traces $\tr\Big(\frac{\fa}{\alpha - \alpha^*}\Big)$ retrieved in the Download Trace. 
%$3(n-3)$ traces $\tr\Big(\frac{\fa}{\alpha - \alpha^*}\Big)$, $\tr\Big(\frac{\fa}{\alpha - \overline{\alpha}}\Big)$, and $\tr\Big(\frac{\fa}{\alpha - \alpha'}\Big)$, $\alpha \in A \setminus \{\alpha^*, \overline{\alpha}, \alpha'\}$, downloaded in Round~I. 
%Therefore, no extra bandwidth is needed. 

We now prove that the two ``interfering'' traces $\tr\big( \ptab\fab \big)$
and $\tr\big( \ptap\fap \big)$ can be deduced from the known target traces for $\fab$ and $\fap$, respectively. 
Indeed, by Lemma~\ref{lem:Qp}~(b), $\ptab = p_{\ut,\alpha^*}(\overline{\alpha})$ belongs to the root space of $\qsbz$, which
is generated by $\{v_1,\ldots,\vtmo\}$. Therefore, the ``interfering'' trace $\tr\big( \ptab\fab \big)$ is a $B$-linear combination of the known target traces $\tr\big( v_i\fab \big)$, $i \in [t-1]$. Similarly, as $\ptap = p_{\ut,\alpha^*}(\alpha')$ belongs to the root space of $\qpsz$, which is generated by $\{w_1,\ldots,w_{t-1}\}$, the ``interfering'' trace $\tr\big( \ptap\fap \big)$ is a $B$-linear combination of the known target traces
$\tr\big( w_i\fap \big)$, $i \in [t-1]$. This completes the proof. 
\end{proof} 

\subsection{A Distributed Three-Erasure Repair Scheme for Reed-Solomon Codes}
\label{subsec:distributed_3erasures}

We can easily modify the centralized repair scheme proposed in Section~\ref{subsec:centralized_3erasures} to obtain a distributed scheme. The following theorem is the distributed version of Theorem~\ref{thm:three_erasures}. 

\begin{theorem} 
\label{thm:three_erasures_distributed}
Suppose $f(x) \in F[x]$ is a polynomial of degree at most $k-1$, which
corresponds to a codeword in an Reed-Solomon code $\C = \rsk$. 
Assume that $\fas$, $\fab$, and $\fap$
are the three erased codeword symbols. Given that the field extension 
degree $t$ is divisible by the characteristic of $F$, and that 
\[
\left\{
\frac{\overline{\alpha}-\alpha^*}{\overline{\alpha} - \alpha'},
\frac{\alpha'-\overline{\alpha}}{\alpha' - \alpha^*},
\frac{\alpha^*-\alpha'}{\alpha^* - \overline{\alpha}}
\right\} \cap K \neq \varnothing
\]
holds, where $K = \ker(\tr)$ is the kernel of the trace function, there exists a distributed repair scheme where each lost symbols is recovered with a repair bandwidth of $n-1$ sub-symbols.  
\end{theorem}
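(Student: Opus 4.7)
The plan is to adapt the centralized scheme of Theorem~\ref{thm:three_erasures} by assigning one replacement node to each erased position and by replacing the centralized aggregation of helping traces with a small amount of inter-replacement communication, in the same spirit as the two-erasure scheme of Theorem~\ref{thm:distributedII}.

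In Round~I, I would have the three replacement nodes act independently: the node responsible for $\fas$ downloads $\tr\big(f(\alpha)/(\alpha-\alpha^*)\big)$ from each surviving node, and analogously for $\fab$ and $\fap$. The analysis of Round~I in Theorem~\ref{thm:three_erasures} carries over verbatim, so each replacement node recovers $s \in \{t-2,t-1\}$ independent traces of its own lost symbol, at a total cost of $3(n-3)$ sub-symbols over $B$.

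In Round~II, I would execute the two triangular cycles of Lemma~\ref{lem:cycle1} and Lemma~\ref{lem:cycle2} by peer-to-peer helper messages. The Activation Lemma, under the hypothesis on the pairwise-difference ratios, guarantees that each cycle can be ignited by a single helper message: the peer that holds the required interfering trace as a linear combination of its Round~I data computes that one element of $B$ and sends it to the peer who needs it. Every subsequent step of a cycle is executed in the same way---the replacement node that has just unlocked a new trace evaluates the interfering linear combination appearing in the next repair equation of the cycle and forwards it as a single sub-symbol to the next peer. Each of the two cycles thus consists of three directed messages along the triangle of replacement nodes, for a total of six inter-replacement sub-symbols and two received per node.

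Combining the two rounds, each replacement node assembles $n-1$ sub-symbols and $s+2$ traces of its target symbol, yielding a total repair bandwidth of $3(n-1)$. When these $s+2$ traces are $t$-independent, which is in particular the case when $s = t-1$, the repair concludes, matching the lower bound noted in Remark~\ref{rm:distributedII}. The main obstacle is the degenerate subcase $s = t-2$ in which two of the root spaces $\Ksb$, $\Kbp$, $\Kps$ collapse and the Round~III construction of the centralized scheme must be invoked: each new repair equation then carries \emph{two} interfering traces rather than one, so the delicate step is to choose $u_{s+3}, v_{s+3}, w_{s+3}$ and schedule the peer messages so that all six required interfering linear combinations---which by Lemma~\ref{lem:last_traces} depend only on data already held by the respective senders---can be folded into the same budget of two messages per replacement node without exceeding the $3(n-1)$ total.
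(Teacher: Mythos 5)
Your overall architecture is the right one and matches the paper's: Round~I is run independently by the three replacement nodes exactly as in the centralized scheme, and the cycles of Lemma~\ref{lem:cycle1}, Lemma~\ref{lem:cycle2} and Lemma~\ref{lem:activation} are executed by peer-to-peer messages among the replacement nodes, in the spirit of the two-erasure distributed scheme. However, your case analysis is inverted, and this is not cosmetic. The $s+2$ Round-II traces of $\fas$ have rank $t$ precisely when $s=t-2$: in that case the root spaces of $\qsbz$ and $\qpsz$ are two \emph{distinct} $(t-1)$-dimensional subspaces, so $\{u_1,\ldots,u_{s+2}\}$ spans $F$ and no Round~III is needed. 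When $s=t-1$ all three root spaces coincide with $\Ksbp$, so $u_{s+1}$ and $u_{s+2}$ lie in the span of $\{u_1,\ldots,u_s\}$, the cycle messages carry no new information, and it is exactly this case (not $s=t-2$, as you claim) that forces the Round~III check $p_{s+3}$ with \emph{two} interfering traces.

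This inversion is what makes your bandwidth accounting break down in the case you flag as the ``main obstacle,'' and you leave that obstacle unresolved. The resolution in the paper is that the Round~II and Round~III downloads are mutually exclusive per node: if $s=t-2$, each replacement node downloads two sub-symbols in Round~II (your six triangle messages) and Round~III is skipped; if $s=t-1$, each replacement node downloads \emph{nothing} in Round~II---running your six messages there would be pure waste---and instead downloads exactly two sub-symbols in Round~III, e.g.\ the node repairing $\fas$ obtains $\tr\big(\fab/(\overline{\alpha}-\alpha^*)\big)$ and $\tr\big(\fap/(\alpha'-\alpha^*)\big)$ from the other two replacement nodes, which suffice by Lemma~\ref{lem:last_traces} to cancel both interfering traces; moreover, when $s=t-1$ the root spaces of $\qsbz$ and $\qpsz$ are spanned by the Round-I basis alone, so the two senders can compute these sub-symbols from their Round-I traces even though their own symbols are not yet repaired. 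Either way each node downloads $(n-3)+2=n-1$ sub-symbols, giving $3(n-1)$ in total, whereas your scheme, which runs the cycles unconditionally and then appends Round~III on top, exceeds $3(n-1)$ by six sub-symbols whenever $s=t-1$.
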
 

\begin{remark}
\label{rm:distributedIII}
In a naive scheme, one RN first downloads $kt$ sub-symbols from a set of $k$ available nodes, recovers all three erased symbols, and then sends the corresponding symbols to the other two RNs. The total bandwidth used is $kt+2t$, which is worse than the scheme we proposed if $\frac{k+2}{n-1} > \frac{3}{t}$, i.e. when $t$ is sufficiently large or when the code has high rate.
\end{remark}
%As discussed earlier in Remark~\ref{rm:distributedII}, our distributed scheme achieves optimal repair bandwidth $n-1$ sub-symbols per erased symbol for those specified triples when $n = |F| = |B|^t$ and $n - k = |B|^{t-1}$.  

We use the same notations as in Section~\ref{subsec:centralized_3erasures}. 
The repair process also consists of two phases. \\

\nin\textbf{Download Phase.}
\begin{itemize}
	\item The RN for $\fas$ downloads $n-3$ repair traces $\tr\Big(\frac{\fa}{\alpha - \alpha^*}\Big)$ from $n-3$ available nodes, where $\alpha \in A \setminus \{\alpha^*, \overline{\alpha}, \alpha'\}$, and recovers $s$ independent target traces of $\fas$, i.e. $\tr\big(u_i \fas\big)$, $i \in [s]$.
	\item The RN for $\fab$ downloads $n-3$ repair traces $\tr\Big(\frac{\fa}{\alpha - \overline{\alpha}}\Big)$ from $n-3$ available nodes, where $\alpha \in A \setminus \{\alpha^*, \overline{\alpha}, \alpha'\}$, and recovers $s$ independent target traces of $\fab$, i.e. $\tr\big(v_i \fab\big)$, $i \in [s]$.
	\item The RN for $\fap$ downloads $n-3$ repair traces $\tr\Big(\frac{\fa}{\alpha - \alpha'}\Big)$ from $n-3$ available nodes, where $\alpha \in A \setminus \{\alpha^*, \overline{\alpha}, \alpha'\}$, and recovers $s$ independent target traces of $\fap$, i.e. $\tr\big(w_i \fap\big)$, $i \in [s]$.	
\end{itemize}
%Recall that $t-2 \leq s \leq t-1$.\\
 
\nin\textbf{Collaboration Phase.}

We describe the tasks the RN for $\fas$ performs in this phase.
Similar descriptions apply to the RNs for $\fab$ and $\fap$. 
We consider two cases, depending on whether $s = t-2$ or $s = t-1$.  

\underline{Case~1}: $s = t - 2$. The RN for $\fas$ computes
the last two target traces $\tr\big(\utmo \fas\big)$ and $\tr\big(\ut \fas\big)$ as follows.
The first trace can be obtained by using the repair equation \eqref{eq:ptmo} generated by $\ptmo$, in which the only unknown term $\tr\big( \ptmoap \fap\big)$ can be determined by downloading an additional repair trace $\tr\Big( \frac{\fap}{\alpha'-\alpha^*} \Big)$ from the RN for $\fap$. The availability of this repair trace at the RN for $\fap$ is explained by either Cycle Lemma~I (Lemma~\ref{lem:cycle1}) or Activation Lemma (Lemma~\ref{lem:activation}).
A similar assertion holds for the second trace.  
The RN for $\fas$ now has $t$ independent traces to recover this lost symbol.
It has downloaded in total $n-1 = (n - 3) + 2$ sub-symbols (repair traces) from all other $n-1$ nodes.  
We illustrate the flow of the recovery data among the RNs in Fig.~\ref{fig:distributed_three_case1}, when $\frac{\overline{\alpha}-\alpha^*}{\overline{\alpha} - \alpha'} \in K$.
Essentially, if a RN can determine a new target trace, it can also compute the repair trace needed for the next RN along the cycle, and so forth.

\begin{figure}[htb]
\centering
\includegraphics[scale=1]{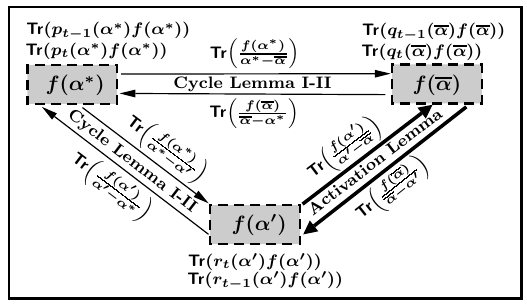}
\caption{Illustration of the data flow in the Collaboration Phase when $s=t-2$ and $\frac{\overline{\alpha}-\alpha^*}{\overline{\alpha} - \alpha'} \in K$.
The two repair cycles are first activated at the replacement nodes for $\fab$
and $\fap$, which can compute new target traces from their previously obtained traces, thanks to the Activation Lemma.
These two nodes then are able to compute the repair traces to send to the next
RN along the two cycles. This process is repeated until every node is able to 
recover the two new target traces. The outer arrows represent the Repair Cycle I, while the inner arrows represent the Repair Cycle II.
The two cycles operate independently in parallel. 
}
\label{fig:distributed_three_case1}
%\vspace{-10pt}
\end{figure}  

\underline{Case 2}: $s = t-1$. The RN for $\fas$ can compute
the last target trace $\tr\big(\ut \fas\big)$ based on the repair equation
\eqref{eq:pt_case2} generated by $\pt$. The only unknown terms are $\tr\big( \ptab\fab \big)$ and $\tr\big( \ptap\fap \big)$.  
According to the proof of Lemma~\ref{lem:last_traces}, these two terms can
be determined by downloading two additional sub-symbols (repair traces)
$\tr\Big( \frac{\fab}{\overline{\alpha} -\alpha^*} \Big)$ and 
$\tr\Big( \frac{\fap}{\alpha' -\alpha^*} \Big)$ from the RNs for $\fab$
and $\fap$, respectively. Thus, the total number of downloaded sub-symbols is $n-1 = (n-3) + 2$.   

\subsection{Correctable Patterns of Three Erasures}

We next evaluate the number of three-erasure patterns that are correctable at low costs
by our centralized and distributed repair schemes developed in the previous sections. 

Under the assumption that the field extension degree
$t$ is divisible by the characteristic of the fields $B$ and $F$, according to Theorem~\ref{thm:three_erasures}, the erased codeword symbols corresponding to three evaluation points
$\alpha$, $\beta$, and $\gamma$, that satisfy   
\begin{equation} 
\label{eq:abg}
\left\{
\frac{\beta-\alpha}{\beta-\gamma},
\frac{\gamma-\beta}{\gamma-\alpha},
\frac{\alpha-\gamma}{\alpha-\beta}
\right\} \cap K \neq \varnothing,
\end{equation} 
can always be recovered by downloading $3(n-3)$ sub-symbols from the $n-3$ unerased codeword symbols. 
Fix two arbitrary evaluation points, say $\alpha$ and $\beta$. We would like to know how many choices there are for the third point $\gamma$ so that the
triple of points satisfies \eqref{eq:abg}. 
%The principle of inclusion-exclusion is one possible way to count the number of such triples. We do not have a complete answer at this moment.
%Two things we do know are as follows. 
For two fixed distinct elements $\alpha$ and $\beta$, and for each $\kappa \in K \setminus \{0,1\}$, we can define 
\[
\gamma = \beta - \frac{\beta-\alpha}{\kappa}.
\]
It is straightforward to verify that $\gamma \neq \alpha$, $\gamma \neq \beta$, and
that $\frac{\beta-\alpha}{\beta-\gamma} \in K$. Therefore, there are $|K|-2 = \frac{n}{|B|}-2$ choices for $\gamma$ that ensure that 
$\frac{\beta-\alpha}{\beta-\gamma} \in K$. 
The same assertion holds if we were to require $\frac{\gamma-\beta}{\gamma-\alpha} \in K$ or 
$\frac{\alpha-\gamma}{\alpha-\beta} \in K$. Hence, the number of correctable triples is at least
a fraction of $\frac{1}{|B|}$ of the number of all triples. 

%We can also verify that fixing $\alpha$
%and $\beta$, and selecting two ratios out of the three in \eqref{eq:abg}, the number of choices of $\gamma$ that ensures that both ratios belong to $K$ is equal to
%\[
%|\{\kappa \in K \colon \kappa^{-1} \in K\}| - 1.
%\] 
%Observe that the first term is simply the number of common roots of the trace polynomial and its reciprocal polynomial.  

A computer-aided count of the number of correctable triples, two of which are fixed, is given in Table~\ref{tab:counting} for  some small values of $|B|$ and $t$. 

\begin{table}[htb]
\centering
\begin{tabular}{|l|c|c|}
\hline
& $\#$Correctable triples & $\#$All triples $=n - 2$\\ \hline
$|B|=2,t=4$ & $14$ & $14$\\ \hline	
$|B|=2,t=6$ & $60$ & $62$\\	\hline
$|B|=2,t=8$ & $206$ & $254$\\	\hline
$|B|=2,t=10$ & $900$ & $1022$\\	\hline
$|B|=4,t=4$ & $158$ & $254$\\	\hline
$|B|=4,t=6$ & $2,330$ & $4,094$\\ \hline
$|B|=4,t=8$ & $37,886$ & $65,534$\\ \hline
$|B|=8,t=4$ & $1,406$ & $4,094$\\ \hline
$|B|=8,t=6$ & $86,694$ & $262,142$\\ \hline
$|B|=3,t=3$ & $19$ & $25$\\ \hline
$|B|=3,t=6$ & $529$ & $727$\\ \hline
$|B|=3,t=9$ & $14,083$ & $19,681$\\ \hline
$|B|=9,t=3$ & $223$ & $727$\\ \hline
$|B|=9,t=6$ & $158,263$ & $531,439$\\ 
\hline
\end{tabular}
\caption{Fixing two erased locations, the second column counts the number of the 
third erased location that result in correctable patterns of three erasures. 
The third column specifies the number of all possible choices for the third location
once the first two are fixed, which is $n-2$.}  
\label{tab:counting}
\vspace{-10pt}
\end{table}  

\section{Conclusions}
\label{sec:conclusions}

\subsection{Open Problems}
\label{subsec:open_problems}

We proposed centralized and distributed repair schemes for the recovery of
two and three erasures/failures in Reed-Solomon codes. The distributed schemes offer the same repair bandwidths per erasure as in the case of a single erasure. %among all linear repair schemes for Reed-Solomon codes for which $n = |F| = |B|^t$ and $k = n(1-1/|B|)$. 
Several open questions remain, including
\begin{itemize}
	\item[(P1)] Finding distributed repair schemes that can recover \emph{all} three-erasure patterns with the repair bandwidths of $n-1$ sub-symbols (per erased symbol) and, in addition, when $t$ is not divisible by $\cF$.
	\item[(P2)] Establishing a lower bound on the repair bandwidth and developing repair schemes for an arbitrary number of erasures that meet the bound. We note that in a follow-up work, Bartan and Wootters~\cite{BartanWootters_MultipleErasures_2017} extended our \emph{centralized} repair scheme to tackle up to $n-k$ erasures. They were also able to improve the repair bandwidth of the scheme based on the observation that the downloaded data in our current scheme is often redundant. Such an extension in the \emph{distributed} setting remains unknown.
	\item[(P3)] Developing efficient repair schemes for codes with an arbitrary number of parities, with particular emphasis on those used in practice, which usually have short lengths, small redundancy, and $t = 8,16,32$, which is closer to $n$ than $\log(n)$ as considered in the current setting. We remark that Duursma and Dau~\cite{DuursmaDau2017} recently obtained a reduction of 32.5\% in repair bandwidth of RS(14,10) code that is used in Facebook f4 Storage System, which improved upon previous reductions of 18.75\% in~\cite{Shanmugam2014} and 20\% in~\cite{GuruswamiWootters2016}, for single erasures. It is of great practical interest to investigate the possibility of repairing such codes (see Table~\ref{tab:popular_systems}) more efficiently in the presence of multiple erasures.
	\item[(P4)] In the large subpacketization regime, Ye and Barg~\cite{YeBarg_RS_Multi_2017} extended the work of Tamo {\et}~\cite{TamoYeBarg2017} to construct RS codes that adopt \emph{centralized} repair schemes with bandwidths achieving the cut-set bound (\cite{Dimakis_etal2007, Dimakis_etal2010}) for any number of erasures. Investigating \emph{distributed} repair schemes for RS codes in 
that regime is another open problem for future research.
\end{itemize}

\subsection{Related Works and Performance Comparison}

Apart from those recent works on repairing RS codes mentioned in Section~\ref{subsec:open_problems}, there is also a number of related results in the literature that extend minimum storage regenerating (MSR) codes~\cite{Dimakis_etal2007, Dimakis_etal2010}, originally designed to recover single erasures only, to cope with multiple erasures. 
In such settings, one often fixes $n$, $k$, and a finite field $B'$, and constructs an MDS code over a larger field $F' \cong B'^{s}$. The field extension degree $s$ is referred to as the \emph{subpacketization} level. In most of these works, $s$ is often larger than or equal to $n-k$. By contrast, in our work, we fix $n$, $k$, and $F$, and choose a subfield $B$ of $F$ so that $|F| = |B|^t$ and $t \leq \log_{|B|}(n-k) \ll n-k$.
By slightly adapting the notation of~\cite{TamoWangBruck_ISIT2011}, we may define the \emph{rebuilding ratio} as the fraction of data (originally) stored at \emph{each} helper node that is sent to a RN, or to the repair center, in order to reconstruct the content of \emph{each} failed node. Most of the codes known in the literature have a rebuilding ratio of $1/(n-k)$, while Reed-Solomon codes with our repair schemes have a rebuilding ratio of $1/t \geq 1/\big(\log_{|B|}(n-k)+1\big)$. MDS codes with larger subpacketization levels allow for smaller rebuilding ratios at the cost
of increasing encoding/decoding as well as code description complexities. 

We briefly review these results and compare them with the results described in the previous sections.  
 
The first line of work is concerned with the problem of repairing $e$ erasures $(e \geq 1)$ in MDS codes in a \emph{distributed} manner. The distributed repair process is often referred to as \emph{cooperative} or \emph{collaborative} repair. 
A cooperative repair scheme generally consists of two phases. In the first phase, the RNs contact and download recovery data from the available nodes. In the second phase, they exchange information in order to help each other to complete their repair processes. Our distributed repair schemes may be viewed as an adaptation of this general strategy to the trace-repair framework~\cite{GuruswamiWootters2016}.  

Hu~{\et}~\cite{Hu2010} presented a lower bound on the total repair bandwidth, which equals $\frac{(n-1)es}{n-k}$ sub-symbols in a subfield $B'$, where $s$ is the subpacketization level.
Wang~{\et}~\cite{Wang2010} generalized this bound to $\sum_{i=1}^e\frac{d_is}{d_i-k+1}$, by allowing each RN for $f(\alpha_i)$ to connect to $d_i$ nodes (including both available and RNs), where $d_i \leq n-1$. MDS codes with repair bandwidths attaining this bound are referred to as \emph{minimum-storage cooperative regenerating (MSCR) codes}.  
The authors of~\cite{Hu2010,Wang2010} also provided probabilistic constructions of such MSCR codes with $s = n-k$. These codes ensure \emph{functional} instead of \emph{exact} repair, i.e. there is no guarantee that the repaired content of a failed node is the same as the original. 
When $d_i=n-1$ for every $i \in [n]$, these MSCR codes have a rebuilding ratio of $\frac{1}{n-k}$. 
%which amounts to a total repair bandwidth of $e(n-1)$ sub-symbols, assuming $d_i=n-1$ for every $i \in [n]$. 
Our distributed repair schemes for Reed-Solomon codes with $e = 2,3$ (see Remark~\ref{rm:distributedII}) achieve a building ratio $\frac{1}{t} \geq \frac{1}{\log_{|B|}(n-k)+1}$.  Thus, their random codes have a lower rebuilding ratio than the Reed-Solomon codes under our repair schemes, while employing a higher level of subpacketization.  
Le Scouarnec~\cite{Scouarnec2012} proposed another construction of MSCR codes with \emph{exact} repair for $e=2$ erasures when $k = 2$. This result was later extended to cover all $k \geq 3$ and $n = 2k$ in the work of Chen and Shum~\cite{ChenShum2013}. 
Exact-repair MSCR codes also exist for $d = k$ and $e \leq n - d$, with $s = n - k$, as shown by Shum and Hu~\cite{ShumHu2013}.
Li and Li~\cite{LiLi2014} proved the interesting fact that any MSR code~(see~\cite{Dimakis_etal2007,Dimakis_etal2010}) 
is also an MSCR code for $e = 2$ erasures. Therefore, an MSCR code for $e = 2$ exists whenever an MSR code with the same parameters exists.  
The most recent work along this line is the paper by Shum and Chen~\cite{ShumChen2016}, which introduced a repair scheme that can recover any number of \emph{systematic} node failures in the MISER code~\cite{Shah2012}, for which $n = 2k$. This scheme is also bandwidth-optimal, and uses a level of subpacketization equal to $n - k$.      

The problem of \emph{centralized} repair of multiple erasures for MDS codes has also been studied in the literature. 
Cadambe~{\et}~\cite{Cadambe2013} proved that in order to repair $e$ erasures, the total repair bandwidth has to be 
at least $\frac{eds}{d+e-k}$ sub-symbolss from a subfield $B'$, where $d \leq n-e$ is the number of available nodes that the repair center contacts. 
MDS codes with repair bandwidths attaining this bound are referred to as \emph{minimum-storage multi-node regenerating (MSMR) codes}. 
They also showed the existence of \emph{asymptotic} MSMR codes with  
total repair bandwidths approaching the lower bound when the subpacketization level $s \to \infty$, given that the subfield $B'$ is fixed. 
When $d = n - e$, Tamo~{\et}~\cite{TamoWangBruck2013} demonstrated that their Zigzag code is an exact-repair MSMR code if restricted to $e$ \emph{systematic} node erasures, with subpacketization level $s = (n-k)^{k-1}$. 
Recently, this result was extended by Wang~{\et}~\cite{WangTamoBruck2016} to cover \emph{all} $e$-erasure patterns. Note that when $d = n - e$, the rebuilding ratio of an MSMR code is $\frac{1}{n-k}$, which is smaller than the rebuilding ratio of $\frac{1}{t}
\geq \frac{1}{\log_{|B|}(n-k)+1}$ of our centralized repair schemes for Reed-Solomon codes with $e = 2,3$. However, their subpacketization level $s=(n-k)^{n-1}$ is also much larger than ours, which equals $t \leq \log_{|B|}(n-k)+1$. 
The centralized repair problem was also investigated in the context of wireless distributed storage
systems, where storage nodes are fully connected by a common broadcast channel~\cite{HuSungChan2015}. 

Ye and Barg~\cite{YeBarg_ISIT2016,YeBarg2016} introduced a new notion of 
MDS codes with the \emph{universally error-resilient} $(e,d)$-optimal repair property.
These codes are MSMR codes for all $e \leq n-k$ and all $d \leq n - e$ simultaneously.
The authors of~\cite{YeBarg_ISIT2016,YeBarg2016} also provided a construction of such codes 
whenever the subfield $B'$ has size $|B'| \geq sn$, where $s = \lcm(1,2,\ldots,n-k)$, with subpacketization level $t = s^n$. 

\section*{Acknowledgment}
This work has been supported in part by the NSF grant CCF 1526875, the Center for Science of Information under the grant NSF 0939370, and the NSF grant CCF 1619189. 
H. M. Kiah was also supported in part by the Singapore Ministry of Education under Research Grants MOE2016-T1-001-156 and MOE2015-T2-2-086.

\bibliographystyle{IEEEtran}
\bibliography{RepairingRSCodes_MultipleErasures}

%\vspace{-20pt}
\begin{IEEEbiographynophoto}
{Hoang Dau} received the B.S. degree in applied
mathematics and informatics from Vietnam National
University, Hanoi, Vietnam, in 2006, and the M.S.
and Ph.D. degrees in mathematical sciences from
Nanyang Technological University, Singapore, in
2009 and 2012, respectively.
He is currently an ARC DECRA Fellow and Research Fellow 
with the Department of Electrical and Computer Systems Engineering, Monash University. His research interests include coding theory, network coding, and combinatorics.
\end{IEEEbiographynophoto} 
%\vspace{-10pt}

\begin{IEEEbiographynophoto}
{Iwan M. Duursma} received his PhD in Mathematics from the University
of Eindhoven in 1993. After positions with CNRS IML Luminy, University
of Puerto Rico, Bell-Labs, AT\&T Research, and University of Limoges,
he is currently a professor at the University of Illinois at Urbana-Champaign.
\end{IEEEbiographynophoto} 

\begin{IEEEbiographynophoto}
{Han Mao Kiah} received his Ph.D. degree in mathematics from Nanyang Technological University (NTU), Singapore, in 2014. From 2014 to 2015 he was a Postdoctoral Research Associate at the Coordinated Science Laboratory, University of Illinois at Urbana-Champaign. Currently he is a Lecturer at the School of Physical and Mathematical Sciences, NTU, Singapore. His research interests include combinatorial design theory, coding theory, and enumerative combinatorics.
\end{IEEEbiographynophoto} 

\begin{IEEEbiographynophoto}
{Olgica Milenkovic} is a professor of Electrical and Computer Engineering at the University of Illinois, Urbana-Champaign (UIUC), and Research Professor at the Coordinated Science Laboratory. She obtained her Masters Degree in Mathematics in 2001 and PhD in Electrical Engineering in 2002, both from the University of Michigan, Ann Arbor. Prof. Milenkovic heads a group focused on addressing unique interdisciplinary research challenges spanning the areas of algorithm design and computing, bioinformatics, coding theory, machine learning and signal processing. Her scholarly contributions have been recognized by multiple awards, including the NSF Faculty Early Career Development (CAREER) Award, the DARPA Young Faculty Award, the Dean’s Excellence in Research Award, and several best paper awards. In 2013, she was elected a UIUC Center for Advanced Study Associate and Willett Scholar while in 2015 she was elected a Distinguished Lecturer of the Information Theory Society. In 2018 she became an IEEE Fellow. She has served as Associate Editor of the IEEE Transactions of Communications, the IEEE Transactions on Signal Processing, the IEEE Transactions on Information Theory and the IEEE Transactions on Molecular, Biological and Multi-Scale Communications. In 2009, she was the Guest Editor in Chief of a special issue of the IEEE Transactions on Information Theory on Molecular Biology and Neuroscience. 
\end{IEEEbiographynophoto}
\end{document}